\newtheorem{thm}{Theorem}[section]
\newtheorem{lem}[thm]{Lemma}
\newtheorem{pro}[thm]{Proposition}
\newtheorem{rmk}[thm]{Remark}
\newtheorem{defi}[thm]{Definition}
\newcommand {\emptycomment}[1]{}
\newcommand{\be }{\begin{equation}}
\newcommand{\ee }{\end{equation}}
\newcommand{\g}{\mathfrak g}
\newcommand{\huaC}{{\mathcal{C}}}%{\mathcal{C}}
\newcommand{\Id}{\rm{Id}}
\newcommand{\br}[1]{   [ \cdot,    \cdot  ]   }
\newcommand{\Hom}{\mathrm{Hom}}
\newcommand{\gl}{\mathfrak {gl}}
\newcommand{\reg}{\mathrm{reg}}
\begin{document}

\title[]{The full cohomology, abelian extensions and formal deformations of Hom-pre-Lie algebras}

\author{Shanshan Liu}
\address{College of Mathematics and Systems Science, Shandong University of Science and Technology, Qingdao 266590, Shandong, China}
\email{shanshanmath@163.com}

\author{Abdenacer Makhlouf}
\address{University of Haute Alsace, IRIMAS-Mathematics department, Mulhouse, France}
\email{abdenacer.makhlouf@uha.fr}

\author{Lina Song}
\address{Department of Mathematics, Jilin University, Changchun 130012, Jilin, China}
\email{songln@jlu.edu.cn}

%\date{\today}

\begin{abstract}
The main purpose of this paper is to provide a full  cohomology of a Hom-pre-Lie algebra with coefficients in a given  representation. This new type of cohomology exploit strongly the Hom-type structure and fits perfectly with simultaneous deformations of the multiplication and the homomorphism defining a Hom-pre-Lie algebra.  Moreover, we show that its second cohomology group classifies abelian extensions of a Hom-pre-Lie algebra by a  representation.

\end{abstract}

%\subjclass[2010]{17B10, 17B56, 17A42}
\footnotetext{{\it{MSC 2020}}: 17B61, 17D30.}
\keywords{ Hom-pre-Lie algebra, cohomology, abelian extension, formal deformation}

\maketitle
\vspace{-5mm}
\tableofcontents

\allowdisplaybreaks

%\end{document}

\section{Introduction}

The notion  of a pre-Lie algebra has been introduced  independently  by M. Gerstenhaber in  deformation theory  of rings and algebras \cite{Gerstenhaber63} and by Vinberg, under the name of left-symmetric algebra,  in his theory of homogeneous convex  cones \cite{Vinberg}. Its defining identity is weaker than associativity and lead also to a Lie algebra using commutators. This algebraic structure describes some properties of cochains space in Hochschild cohomology of an associative algebra, rooted trees and vector fields on affine spaces.
 Moreover, it is playing an increasing role in algebra, geometry and physics due to their applications in nonassociative algebras, combinatorics,  numerical Analysis and quantum field theory, see also \cite{Pre-lie algebra in geometry,Bai,Bai2,CK}.

 Hom-type algebras appeared naturally when studying $q$-deformations of some algebras of vector fields, like Witt and Virasoro algebras. It turns out that the Jacobi identity is no longer satisfied, these new structures involving a bracket and a linear map satisfy a twisted version of the Jacobi identity and define a so called Hom-Lie algebras which form a wider class, see \cite{HLS,LD1}.   Hom-pre-Lie algebras were introduced in \cite{MS2} as a class of Hom-Lie admissible algebras, and play important roles in the study of Hom-Lie bialgebras and Hom-Lie 2-algebras \cite{Cai-Sheng,sheng1,SC}. Recently Hom-pre-Lie algebras were studied from several aspects. Cohomologies of Hom-pre-Lie algebras were studied in \cite{LSS}; The geometrization of Hom-pre-Lie algebras was studied in \cite{Qing}; Universal $\alpha$-central extensions of Hom-pre-Lie algebras were studied in \cite{sunbing}; Hom-pre-Lie bialgebras were studied in \cite{LSSbi,QH}. Furthermore, connections between (Hom-)pre-Lie algebras and various algebraic structures have been established and discussed; like with Rota-Baxter operators, $\mathcal{O}$-operators, (Hom-)dendriform algebras, (Hom-)associative algebras and Yang-Baxter equation.

The cohomology of pre-Lie algebras was defined in \cite{DA}  and generalized  in a straightforward way to Hom-pre-Lie algebras in \cite{LSS}. Note that the cohomolgoy given there has some restrictions: the second cohomology group can only control deformations of the multiplication, and it can not be applied to study simultaneous deformations of both the multiplication and the homomorphism in a Hom-pre-Lie algebra. The main purpose of this paper is to define a new type of cohomology of Hom-pre-Lie algebras which is richer than previous one. The obtained cohomology is called the full cohomology and allows to control simultaneous deformations of both the multiplication and the homomorphism in a Hom-pre-Lie algebra. %One can also use this full cohomology to classify abelian extensions of Hom-pre-Lie algebras without the requirement being diagonal given in \cite{SLL}.
This type of cohomology  was established  for Hom-associative algebras and Hom-Lie algebras in \cite{BM,Bene}, and called respectively  $\alpha$-type Hochschild cohomology and $\alpha$-type Chevalley-Eilenberg  cohomology. See \cite{AEM,MS1,SLN} for more studies on deformations and extensions of Hom-Lie algebras.

The paper is organized as follows. In Section \ref{sec:full}, first we recall some basics about Hom-Lie algebras, Hom-pre-Lie algebras and representations, and then provide our main result defining the full cohomology of  a Hom-pre-Lie algebra with coefficients in a given  representation. In Section \ref{sec:def}, we study one parameter  formal deformations of a Hom-pre-Lie algebra, where both the defining multiplication and  homomorphism are deformed, using formal power series. We show that the full cohomology of Hom-pre-Lie algebras controls these simultaneous deformations. Moreover, a relationship between deformations of a Hom-pre-Lie algebra and deformations of  its sub-adjacent Lie algebra is established. Section \ref{sec:ext} deals with abelian extensions of Hom-pre-Lie algebras. We show that the full cohomology fits perfectly and its second cohomology group classifies abelian extensions of a Hom-pre-Lie algebra by a given representation. The proof of the  key Lemma to show that the four operators define a cochain complex  is lengthy and it is  given in the  Appendix.

\vspace{2mm}
\noindent
{\bf Acknowledgements. }  We give warmest thanks to Yunhe Sheng   for helpful comments that improve the paper.   This work is supported by
 National Natural Science Foundation of China (Grant No. 12001226).

\section{The full cohomology of Hom-pre-Lie algebras}\label{sec:full}

In this section, first we recall some basic facts about Hom-Lie algebras and Hom-pre-Lie algebras. Then we introduce the full cohomology of Hom-pre-Lie algebras, which will be used to classify infinitesimal deformations and abelian extensions of Hom-pre-Lie algebras.

\begin{defi}{\rm(\cite{HLS})}
A {\bf Hom-Lie algebra} is a triple $(\g,[\cdot,\cdot]_{\g},\phi_{\g})$ consisting of a vector space $\g$, a skew-symmetric bilinear map $[\cdot,\cdot]_{\g}:\wedge^2\g\longrightarrow \g$ and an algebra homomorphism $\phi_{\g}:\g\longrightarrow \g $, satisfying:
  \begin{equation}
[\phi_{\g}(x),[y,z]_{\g}]_{\g}+[\phi_{\g}(y),[z,x]_{\g}]_{\g}+[\phi_{\g}(z),[x,y]_{\g}]_{\g}=0,\quad
\forall~x,y,z\in \g.
\end{equation}
A Hom-Lie algebra $(\g,[\cdot,\cdot]_{\g},\phi_{\g})$ is said to be {\bf regular} if $\phi_{\g}$ is invertible.
  \end{defi}
 \begin{defi}{\rm(\cite{sheng3})}\label{defi:hom-lie representation}
 A {\bf representation} of a Hom-Lie algebra $(\g,[\cdot,\cdot]_{\g},\phi_{\g})$ on
 a vector space $V$ with respect to $\beta\in\gl(V)$ is a linear map
  $\rho:\g\longrightarrow \gl(V)$, such that for all
  $x,y\in \g$, the following equalities are satisfied:
\begin{eqnarray}
\label{hom-lie-rep-1}\rho(\phi_{\g}(x))\circ \beta&=&\beta\circ \rho(x),\\
\label{hom-lie-rep-2}\rho([x,y]_{\g})\circ \beta&=&\rho(\phi_{\g}(x))\circ\rho(y)-\rho(\phi_{\g}(y))\circ\rho(x).
\end{eqnarray}
  \end{defi}
We denote a representation of a Hom-Lie algebra $(\g,[\cdot,\cdot]_{\g},\phi_{\g})$ by a triple  $(V,\beta,\rho)$.
\begin{defi}{\rm(\cite{MS2})}
A {\bf Hom-pre-Lie algebra} $(A,\cdot,\alpha)$ is a vector space $A$ equipped with a bilinear product $\cdot:A\otimes A\longrightarrow A$, and $\alpha\in \gl(A)$, such that for all $x,y,z\in A$, $\alpha(x \cdot y)=\alpha(x)\cdot \alpha(y)$ and the following equality is satisfied:
\begin{eqnarray}
(x\cdot y)\cdot \alpha(z)-\alpha(x)\cdot (y\cdot z)=(y\cdot x)\cdot \alpha(z)-\alpha(y)\cdot (x\cdot z).
\end{eqnarray}
A Hom-pre-Lie algebra $(A,\cdot,\alpha)$ is said to be {\bf regular} if $\alpha$ is invertible.
\end{defi}
Let $(A,\cdot,\alpha)$ be a Hom-pre-Lie algebra. The commutator $[x,y]_C=x\cdot y-y\cdot x$ defines a Hom-Lie algebra $(A,[\cdot,\cdot]_C,\alpha)$, which is denoted by $A^C$ and called  the {\bf sub-adjacent Hom-Lie algebra} of $(A,\cdot,\alpha)$.
\begin{defi}{\rm(\cite{LSS})}
 A morphism from a Hom-pre-Lie algebra $(A,\cdot,\alpha)$ to a Hom-pre-Lie algebra $(A',\cdot',\alpha')$ is a linear map $f:A\longrightarrow A'$ such that for all
 $x,y\in A$, the following equalities are satisfied:
\begin{eqnarray}
\label{homo-1}f(x\cdot y)&=&f(x)\cdot' f(y),\hspace{3mm}\forall x,y\in A,\\
\label{homo-2}f\circ \alpha&=&\alpha'\circ f.
\end{eqnarray}
\end{defi}
\begin{defi}{\rm(\cite{QH})}\label{defi:hom-pre representation}
 A {\bf representation} of a Hom-pre-Lie algebra $(A,\cdot,\alpha)$ on a vector space $V$ with respect to $\beta\in\gl(V)$ consists of a pair $(\rho,\mu)$, where $\rho:A\longrightarrow \gl(V)$ is a representation of the sub-adjacent Hom-Lie algebra $A^C$ on $V$ with respect to $\beta\in\gl(V)$, and $\mu:A\longrightarrow \gl(V)$ is a linear map, satisfying, for all $x,y\in A$:
\begin{eqnarray}
\label{rep-1}\beta\circ \mu(x)&=&\mu(\alpha(x))\circ \beta,\\
\label{rep-2}\mu(\alpha(y))\circ\mu(x)-\mu(x\cdot y)\circ \beta&=&\mu(\alpha(y))\circ\rho(x)-\rho(\alpha(x))\circ\mu(y).
\end{eqnarray}
\end{defi}
We denote a representation of a Hom-pre-Lie algebra $(A,\cdot,\alpha)$ by a quadruple  $(V,\beta,\rho,\mu)$. Furthermore, Let $L,R:A\longrightarrow \gl(A)$ be linear maps, where $L_xy=x\cdot y, R_xy=y\cdot x$. Then $(A,\alpha,L,R)$ is also a representation, which we call the regular representation.

Let $(A,\cdot,\alpha)$ be a Hom-pre-Lie algebra. In the sequel, we will also denote the Hom-pre-Lie algebra multiplication $\cdot$ by $\omega$.

Let $(V,\beta,\rho,\mu)$ be a representation of a Hom-pre-Lie algebra $(A,\omega,\alpha)$. We define $\huaC^n_\omega(A;V)$ and $\huaC^n_\alpha(A;V)$ respectively by
$$\huaC^n_\omega(A;V)=\Hom(\wedge^{n-1} A\otimes A,V),\quad \huaC^n_\alpha(A;V)=\Hom(\wedge^{n-2} A\otimes A,V),\quad
 \forall n\geq 2.$$
Define the set of cochains $\tilde{\huaC}^n(A;V)$ by
\begin{equation}\left\{\begin{array}{ll}
 & \tilde{\huaC}^n(A;V)=\huaC^n_\omega(A;V)\oplus \huaC^n_\alpha(A;V),\quad
 \forall n\geq 2,\\
& \tilde{\huaC}^1(A;V)=\Hom(A,V).
\end{array}\right.
\end{equation}
For all $(\varphi,\psi) \in \tilde{\huaC}^n(A;V),x_1,\dots,x_{n+1}\in A$, we define $\partial_{\omega\omega}:\huaC^n_\omega(A;V)\longrightarrow \huaC^{n+1}_\omega(A;V)$ by
\begin{eqnarray*}
&&(\partial_{\omega\omega}\varphi)(x_1,\dots,x_{n+1})\\
&&=\sum_{i=1}^n(-1)^{i+1}\rho(\alpha^{n-1}(x_i))\varphi(x_1,\dots,\widehat{x_i},\dots,x_{n+1})\\
&&+\sum_{i=1}^n(-1)^{i+1}\mu(\alpha^{n-1}(x_{n+1}))\varphi(x_1,\dots,\widehat{x_i},\dots,x_n,x_i)\\
&&-\sum_{i=1}^n(-1)^{i+1} \varphi(\alpha(x_1),\dots,\widehat{\alpha(x_i)}\dots,\alpha(x_n),x_i\cdot x_{n+1})\\
&&+\sum_{1\leq i<j\leq n}(-1)^{i+j} \varphi([x_i,x_j]_C,\alpha(x_1),\dots,\widehat{\alpha(x_i)},\dots,\widehat{\alpha(x_j)},\dots,\alpha(x_{n+1})),
\end{eqnarray*}
define $\partial_{\alpha\alpha}:\huaC^n_\alpha(A;V)\longrightarrow \huaC^{n+1}_\alpha(A;V)$ by
\begin{eqnarray*}
&&(\partial_{\alpha\alpha} \psi)(x_1,\dots,x_n)\\
&&=\sum_{i=1}^{n-1}(-1)^i\rho(\alpha^{n-1}(x_i))\psi(x_1,\dots,\widehat{x_i},\dots,x_n)\\
&&+\sum_{i=1}^{n-1}(-1)^i\mu(\alpha^{n-1}(x_n))\psi(x_1,\dots,\widehat{x_i},\dots,x_{n-1},x_i)\\
&&-\sum_{i=1}^{n-1}(-1)^i \psi(\alpha(x_1),\dots,\widehat{\alpha(x_i)},\dots,\alpha(x_{n-1}),x_i\cdot x_n)\\
&&+\sum_{1\leq i<j\leq n-1}(-1)^{i+j-1} \psi([x_i,x_j]_C,\alpha(x_1),\dots,\widehat{\alpha(x_i)},\dots,\widehat{\alpha(x_j)},\dots,\alpha(x_n)),
\end{eqnarray*}
define $\partial_{\omega\alpha}:\huaC^n_\omega(A;V)\longrightarrow \huaC^{n+1}_\alpha(A;V)$ by
\begin{eqnarray*}
(\partial_{\omega\alpha} \varphi)(x_1,\dots,x_n)=\beta\varphi(x_1,\dots,x_n)-\varphi(\alpha(x_1),\dots,\alpha(x_n)),
\end{eqnarray*}
and define $\partial_{\alpha\omega}:\huaC^n_\alpha(A;V)\longrightarrow \huaC^{n+1}_\omega(A;V)$ by
\begin{eqnarray*}
&&(\partial_{\alpha\omega} \psi)(x_1,\dots,x_{n+1})\\
&&=\sum_{1\leq i<j\leq n}(-1)^{i+j}\rho([\alpha^{n-2}(x_i),\alpha^{n-2}(x_j)]_C) \psi(x_1,\dots,\hat{x_i},\dots,\hat{x_j},\dots,x_{n+1})\\
&&+\sum_{1\leq i<j\leq n}(-1)^{i+j}\mu(\alpha^{n-2}(x_i)\cdot\alpha^{n-2}(x_{n+1}))\psi(x_1,\dots,\hat{x_i},\dots,\hat{x_j},\dots,x_n,x_j)\\
&&-\sum_{1\leq i<j\leq n}(-1)^{i+j}\mu(\alpha^{n-2}(x_j)\cdot\alpha^{n-2}(x_{n+1}))\psi(x_1,\dots,\hat{x_i},\dots,\hat{x_j},\dots,x_n,x_i).
\end{eqnarray*}
Define the operator $\tilde{\partial}:\tilde{\huaC}^n(A;V)\longrightarrow  \tilde{\huaC}^{n+1}(A;V)$ by
\begin{eqnarray}
\label{complex-cohomology-1} \tilde{\partial}(\varphi,\psi)&=&(\partial_{\omega\omega}\varphi+\partial_{\alpha\omega}\psi,\partial_{\omega\alpha}\varphi+\partial_{\alpha\alpha}\psi),\quad
 \forall \varphi \in \huaC^n_\omega(A;V), \psi \in \huaC^n_\alpha(A;V), n\geq 2,\\
\label{complex-cohomology-2}\tilde{\partial}(\varphi)&=&(\partial_{\omega\omega}\varphi,\partial_{\omega\alpha}\varphi),\quad
 \forall \varphi \in \Hom(A,V).
\end{eqnarray}
The following diagram will explain the above operators:
\begin{center}
   \begin{tikzpicture}
      $$\matrix (m) [matrix of math nodes,row sep=3em,column sep=5em,minimum width=2em] {
			 \huaC^n_\omega(A;V) &  \huaC^{n+1}_\omega(A;V) & \huaC^{n+2}_\omega(A;V)  \\
		\huaC^n_\alpha(A;V) &  \huaC^{n+1}_\alpha(A;V) & \huaC^{n+2}_\alpha(A;V)\\
		};
		\path[->,auto] (m-2-1) edge node[swap]{$\partial_{\alpha\alpha}$}                  (m-2-2);
        \path[->,auto] (m-2-2) edge node[swap]{$\partial_{\alpha\alpha}$}                  (m-2-3);
		\path[->,above] (m-2-1) edge node[below] {$\partial_{\alpha\omega}$}         (m-1-2);
        \path[->,above] (m-2-2) edge node[below] {$\partial_{\alpha\omega}$}         (m-1-3);
		\path[->,auto] (m-1-1) edge node {$\partial_{\omega\omega}$}                  (m-1-2);
        \path[->,auto] (m-1-2) edge node {$\partial_{\omega\omega}$}                  (m-1-3);
        \path[->,below] (m-1-1) edge node[above] {$\partial_{\omega\alpha}$}                  (m-2-2);
        \path[->,below] (m-1-2) edge node[above] {$\partial_{\omega\alpha}$}                  (m-2-3);
		\path   (m-2-1) edge[draw=none]   node [sloped] {$\oplus$} (m-1-1);
		\path   (m-2-2) edge[draw=none]   node [sloped] {$\oplus$} (m-1-2);
        \path   (m-2-3) edge[draw=none]   node [sloped] {$\oplus$} (m-1-3);$$
	\end{tikzpicture}
\end{center}

\begin{lem}\label{lem:cohomology-computation}
With the above notations, we have
\begin{eqnarray}
\partial_{\omega\omega}\circ \partial_{\omega\omega}+\partial_{\alpha\omega}\circ \partial_{\omega\alpha}&=&0,\\
\partial_{\omega\omega}\circ \partial_{\alpha\omega}+\partial_{\alpha\omega}\circ \partial_{\alpha\alpha}&=&0,\\
\partial_{\omega\alpha}\circ \partial_{\omega\omega}+\partial_{\alpha\alpha}\circ \partial_{\omega\alpha}&=&0,\\
\partial_{\omega\alpha}\circ \partial_{\alpha\omega}+\partial_{\alpha\alpha}\circ \partial_{\alpha\alpha}&=&0.
\end{eqnarray}
\end{lem}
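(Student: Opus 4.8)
\section*{Proof proposal}

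The plan is to recognize the four identities as the single statement $\tilde\partial\circ\tilde\partial=0$. Writing a cochain as a pair $(\varphi,\psi)$ with $\varphi$ of $\omega$-type and $\psi$ of $\alpha$-type, the operator \eqref{complex-cohomology-1} is exactly left multiplication by the block matrix
\[
\tilde\partial=\begin{pmatrix}\partial_{\omega\omega}&\partial_{\alpha\omega}\\[2pt]\partial_{\omega\alpha}&\partial_{\alpha\alpha}\end{pmatrix},
\]
and the four displayed equations are precisely the four entries of $\tilde\partial^2$ (positions $(1,1),(1,2),(2,1),(2,2)$ giving the first, second, third and fourth identity respectively). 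Conceptually one expects this to hold because the pair $(\omega,\alpha)$ should be a Maurer--Cartan element of an associated graded Lie algebra with $\tilde\partial=[(\omega,\alpha),\,\cdot\,]$, the Hom-pre-Lie axiom together with multiplicativity $\alpha(x\cdot y)=\alpha(x)\cdot\alpha(y)$ encoding $[(\omega,\alpha),(\omega,\alpha)]=0$; but making this bracket explicit is no shorter than a direct verification, so I would check the four entries separately and group them by difficulty. A useful orienting remark is that, unlike the classical ($\alpha=\Id$, $\beta=\Id$) pre-Lie case, $\partial_{\omega\omega}$ need \emph{not} square to zero once $\alpha\neq\Id$; this is the whole point of passing to the full complex, and it is what forces the cross terms.

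The genuinely easy identity is the third, $\partial_{\omega\alpha}\partial_{\omega\omega}+\partial_{\alpha\alpha}\partial_{\omega\alpha}=0$, because $\partial_{\omega\alpha}$ is the elementary ``defect'' operator $\varphi\mapsto \beta\circ\varphi-\varphi\circ\alpha$ ($\alpha$ applied in every slot), under the identification $\huaC^n_\omega(A;V)=\huaC^{n+1}_\alpha(A;V)=\Hom(\wedge^{n-1}A\otimes A,V)$. I would verify it termwise: in each of the four term types ($\rho$-headed, $\mu$-headed, and the two cochain-reindexing types), one uses the equivariance relations \eqref{hom-lie-rep-1} and \eqref{rep-1} to slide $\beta$ through $\rho(\alpha^{k}x)$ and $\mu(\alpha^{k}x)$, absorbing it into one further power of $\alpha$, so that every internal coefficient is normalized to $\rho(\alpha^{n}x_i)$ or $\mu(\alpha^{n}x_{n+1})$; the cochain-reindexing terms match after applying multiplicativity of $\alpha$ (and hence $\alpha[x_i,x_j]_C=[\alpha x_i,\alpha x_j]_C$). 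The opposite overall signs built into $\partial_{\alpha\alpha}$ (the $(-1)^i$, $(-1)^{i+j-1}$ patterns) versus $\partial_{\omega\omega}$ (the $(-1)^{i+1}$, $(-1)^{i+j}$ patterns) then make the two composites cancel. Each term type is a one-line check.

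The second identity $\partial_{\omega\omega}\partial_{\alpha\omega}+\partial_{\alpha\omega}\partial_{\alpha\alpha}=0$ is the assertion that the connecting map $\partial_{\alpha\omega}$ is, up to sign, a chain map from $(\huaC^\bullet_\alpha,\partial_{\alpha\alpha})$ to $(\huaC^\bullet_\omega,\partial_{\omega\omega})$. Here $\partial_{\alpha\omega}$ is no longer elementary, but the identity is still \emph{linear} in the input cochain, so I would expand both composites, sort the resulting terms into the types $\rho([\cdot,\cdot]_C)$, $\mu(\cdot\,\cdot)$ and pure cochain-reindexing, and match them type by type; the Hom-Lie representation identity \eqref{hom-lie-rep-2} and the Hom-pre-Lie representation identity \eqref{rep-2} are exactly what convert the terms produced by $\partial_{\omega\omega}\partial_{\alpha\omega}$ into those produced by $\partial_{\alpha\omega}\partial_{\alpha\alpha}$.

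The main obstacle is the remaining pair of quadratic identities, $\partial_{\omega\omega}^2=-\partial_{\alpha\omega}\partial_{\omega\alpha}$ and $\partial_{\alpha\alpha}^2=-\partial_{\omega\alpha}\partial_{\alpha\omega}$. I would expand $\partial_{\omega\omega}^2$ on a generic tuple, organizing the sum by the role of the distinguished last argument and by term type ($\rho\rho$, $\rho\mu$, $\mu\mu$, $\rho$ against a bracket, $\mu$ against a product); most contributions cancel in pairs exactly as in the classical pre-Lie computation, the double products inside $\rho$ and $\mu$ being collapsed via the Hom-pre-Lie axiom and \eqref{rep-2}. What survives is a residue of terms each carrying the defect $\beta\circ\varphi-\varphi\circ\alpha$, and the content of the lemma is that this residue is precisely $-\partial_{\alpha\omega}(\beta\circ\varphi-\varphi\circ\alpha)=-\partial_{\alpha\omega}\partial_{\omega\alpha}\varphi$. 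The fourth identity is handled identically once one notes that $\partial_{\alpha\alpha}$ is the degree-shifted, sign-reversed, and once-more-$\alpha$-twisted copy of $\partial_{\omega\omega}$, so $\partial_{\alpha\alpha}^2$ has the same structure with $\partial_{\omega\alpha}\partial_{\alpha\omega}$ in place of $\partial_{\alpha\omega}\partial_{\omega\alpha}$. It is the sheer number of terms in these two expansions, rather than any single conceptual step, that makes the argument long, which is why the detailed matching is deferred to the Appendix.
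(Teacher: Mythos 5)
Your proposal is correct and follows essentially the same route as the paper's Appendix: a direct expansion of the composites followed by term-by-term cancellation, using exactly the ingredients you name --- the equivariance relations \eqref{hom-lie-rep-1}, \eqref{rep-1}, the representation identities \eqref{hom-lie-rep-2}, \eqref{rep-2}, multiplicativity of $\alpha$ (hence $\alpha[x_i,x_j]_C=[\alpha(x_i),\alpha(x_j)]_C$), and the Hom-pre-Lie and sub-adjacent Hom-Lie axioms, with the surviving residue of $\partial_{\omega\omega}\circ\partial_{\omega\omega}$ being precisely $-\partial_{\alpha\omega}\circ\partial_{\omega\alpha}$. The only difference is organizational: the paper writes out the first two identities in full and dismisses the last two as ``similar,'' whereas you would verify the easy third identity explicitly and treat the first and fourth as the heavy expansions; both orderings rest on the same cancellation mechanisms.
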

\begin{proof}
The proof is given in Appendix.
\end{proof}
\begin{thm}\label{thm:operator}
The operator $\tilde{\partial}:\tilde{C}^n(A;V)\longrightarrow \tilde{C}^{n+1}(A;V)$ defined as above satisfies $\tilde{\partial}\circ\tilde{\partial}=0$.
\end{thm}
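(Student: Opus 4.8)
The plan is to reduce $\tilde\partial\circ\tilde\partial=0$ to the four identities of Lemma~\ref{lem:cohomology-computation} by viewing the four component operators as the entries of a $2\times 2$ matrix of linear maps. Writing a cochain in $\tilde{\huaC}^n(A;V)$ as a column $(\varphi,\psi)$ with $\varphi\in\huaC^n_\omega(A;V)$ and $\psi\in\huaC^n_\alpha(A;V)$, formula~\eqref{complex-cohomology-1} exhibits $\tilde\partial$ as
$$\tilde\partial\begin{pmatrix}\varphi\\ \psi\end{pmatrix}=\begin{pmatrix}\partial_{\omega\omega} & \partial_{\alpha\omega}\\ \partial_{\omega\alpha} & \partial_{\alpha\alpha}\end{pmatrix}\begin{pmatrix}\varphi\\ \psi\end{pmatrix}.$$
The content of the theorem is then simply that the square of this operator matrix vanishes.

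Concretely, I would apply $\tilde\partial$ twice using~\eqref{complex-cohomology-1}. The component of $\tilde\partial\circ\tilde\partial(\varphi,\psi)$ lying in $\huaC^{n+2}_\omega(A;V)$ equals
$$(\partial_{\omega\omega}\circ\partial_{\omega\omega}+\partial_{\alpha\omega}\circ\partial_{\omega\alpha})\varphi+(\partial_{\omega\omega}\circ\partial_{\alpha\omega}+\partial_{\alpha\omega}\circ\partial_{\alpha\alpha})\psi,$$
while the component in $\huaC^{n+2}_\alpha(A;V)$ equals
$$(\partial_{\omega\alpha}\circ\partial_{\omega\omega}+\partial_{\alpha\alpha}\circ\partial_{\omega\alpha})\varphi+(\partial_{\omega\alpha}\circ\partial_{\alpha\omega}+\partial_{\alpha\alpha}\circ\partial_{\alpha\alpha})\psi.$$
The four coefficient operators appearing here are precisely the left-hand sides of the four equations in Lemma~\ref{lem:cohomology-computation}, so each of them is zero and hence $\tilde\partial\circ\tilde\partial(\varphi,\psi)=0$ for all $n\geq 2$.

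It remains to treat the degree-one case, where $\tilde\partial$ is given by~\eqref{complex-cohomology-2}. For $\varphi\in\Hom(A,V)$ we have $\tilde\partial\varphi=(\partial_{\omega\omega}\varphi,\partial_{\omega\alpha}\varphi)\in\tilde{\huaC}^2(A;V)$, and a further application of~\eqref{complex-cohomology-1} gives
$$\tilde\partial\circ\tilde\partial(\varphi)=\big((\partial_{\omega\omega}\circ\partial_{\omega\omega}+\partial_{\alpha\omega}\circ\partial_{\omega\alpha})\varphi,\ (\partial_{\omega\alpha}\circ\partial_{\omega\omega}+\partial_{\alpha\alpha}\circ\partial_{\omega\alpha})\varphi\big),$$
whose two entries vanish by the first and third identities of Lemma~\ref{lem:cohomology-computation} respectively. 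Since all of the genuine computational difficulty is already absorbed into Lemma~\ref{lem:cohomology-computation} (whose lengthy verification is deferred to the Appendix), the only obstacle in the present argument is organizational: keeping track of which matrix entry is matched to which identity, and respecting the order in which the off-diagonal operators $\partial_{\alpha\omega}$ and $\partial_{\omega\alpha}$ are composed. I expect no difficulty beyond this matrix bookkeeping.
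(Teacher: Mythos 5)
Your proposal is correct and is essentially the paper's own proof: both expand $\tilde\partial\circ\tilde\partial$ via \eqref{complex-cohomology-1} and \eqref{complex-cohomology-2}, match the resulting four compositions to the four identities of Lemma~\ref{lem:cohomology-computation}, and treat the cases $n\geq 2$ and $n=1$ separately (your degree-one case uses exactly the first and third identities, as in the paper). The $2\times 2$ operator-matrix packaging is a cosmetic reorganization of the same bookkeeping, not a different argument.
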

\begin{proof}
When $n\geq 2$, for all $(\varphi,\psi)\in \tilde{\huaC}^n(A;V)$, by \eqref{complex-cohomology-1} and Lemma \ref{lem:cohomology-computation}, we have
\begin{eqnarray*}
 \tilde{\partial}\circ\tilde{\partial}(\varphi,\psi) &=&\tilde{\partial}(\partial_{\omega\omega}\varphi+\partial_{\alpha\omega}\psi,\partial_{\omega\alpha}\varphi+\partial_{\alpha\alpha}\psi)\\
  &=&(\partial_{\omega\omega}\partial_{\omega\omega}\varphi+\partial_{\omega\omega}\partial_{\alpha\omega}\psi+\partial_{\alpha\omega}\partial_{\omega\alpha}\varphi+\partial_{\alpha\omega}\partial_{\alpha\alpha}\psi,\\
  &&\partial_{\omega\alpha}\partial_{\omega\omega}\varphi+\partial_{\omega\alpha}\partial_{\alpha\omega}\psi+\partial_{\alpha\alpha}\partial_{\omega\alpha}\varphi+\partial_{\alpha\alpha}\partial_{\alpha\alpha}\psi)\\
  &=&0.
\end{eqnarray*}
When $n=1$, for all $\varphi \in \Hom(A,V)$,  by \eqref{complex-cohomology-2} and Lemma \ref{lem:cohomology-computation}, we have
\begin{eqnarray*}
 \tilde{\partial}\circ\tilde{\partial}(\varphi) &=&\tilde{\partial}(\partial_{\omega\omega}\varphi,\partial_{\omega\alpha}\varphi)\\
  &=&(\partial_{\omega\omega}\partial_{\omega\omega}\varphi+\partial_{\alpha\omega}\partial_{\omega\alpha}\varphi,
  \partial_{\omega\alpha}\partial_{\omega\omega}\varphi+\partial_{\alpha\alpha}\partial_{\omega\alpha}\varphi)\\
  &=&0.
\end{eqnarray*}
This finishes the proof.
\end{proof}
We denote the set of closed $n$-cochains by $\tilde{Z}^n(A;V)$ and the set of exact $n$-cochains by $\tilde{B}^n(A;V)$. We denote by $\tilde{H}^n(A;V)=\tilde{Z}^n(A;V)/\tilde{B}^n(A;V)$ the corresponding cohomology groups.

 \begin{defi}
Let $(V,\beta,\rho,\mu)$ be a representation of a Hom-pre-Lie algebra $(A,\cdot,\alpha)$. The cohomology of the cochain complex $(\oplus_{n=1}^\infty\tilde{C}^n(A;V),\partial)$ is called the {\bf full cohomology} of the Hom-pre-Lie algebra $(A,\cdot,\alpha)$ with coefficients in the representation $(V,\beta,\rho,\mu)$.
 \end{defi}
We use $\tilde{\partial}_{\reg}$ to denote the coboundary operator of the Hom-pre-Lie algebra $(A,\cdot,\alpha)$ with  coefficients in the regular representation. The corresponding cohomology group will be denoted by $\tilde{H}^n(A;A)$.

\begin{rmk}
  Compared with the cohomology theory of Hom-pre-Lie algebras studied in \cite{LSS}, the above full cohomology contains more information. In the next section, we will see that the second cohomology group can control simultaneous deformations of the multiplication and the homomorphism in a Hom-pre-Lie algebra.
\end{rmk}

\section{Formal deformations of Hom-pre-Lie algebras}\label{sec:def}
In this section, we study formal deformations of Hom-pre-Lie algebras using the cohomology theory established in the last section. We show that the infinitesimal of a formal deformation is a 2-cocycle and depends only on its cohomology class. Moreover, if the cohomology group $\tilde{H}^2(A;A)$ is trivial, then the Hom-pre-Lie algebra is rigid.

\begin{defi}
Let $(A,\omega,\alpha)$ be a Hom-pre-Lie algebra, $\omega_t=\omega+\sum_{i=1}^{+\infty} \omega_it^i:A[[t]]\otimes A[[t]]\longrightarrow A[[t]]$ be a $\mathbb K[[t]]$-bilinear map and $\alpha_t=\alpha+\sum_{i=1}^{+\infty} \alpha_it^i:A[[t]]\longrightarrow A[[t]]$ be a $\mathbb K[[t]]$-linear map, where $\omega_i:A\otimes A\longrightarrow A$ and $\alpha_i:A\longrightarrow A$ are linear maps. If $(A[[t]],\omega_t,\alpha_t)$ is still a Hom-pre-Lie algebra, we say that $\{\omega_i,\alpha_i\}_{i\geq1}$ generates a {\bf $1$-parameter formal deformation} of a Hom-pre-Lie algebra $(A,\omega,\alpha)$.
\end{defi}

If $\{\omega_i,\alpha_i\}_{i\geq1}$ generates a $1$-parameter formal deformation of a Hom-pre-Lie algebra $(A,\omega,\alpha)$, for all $x,y,z\in A$ and $n=1,2,\dots$,  we have
\begin{equation}\label{cocycle-1}
\sum_{i+j+k=n\atop i,j,k\geq 0}\omega_i(\omega_j(x,y),\alpha_k(z))-\omega_i(\alpha_j(x),\omega_k(y,z))-\omega_i(\omega_j(y,x),\alpha_k(z))+\omega_i(\alpha_j(y),\omega_k(x,z))=0.
\end{equation}
Moreover, we have
\begin{eqnarray}\label{n-sum-cocycle-1}
&&\sum_{i+j+k=n\atop 0<i,j,k\leq n-1 }\omega_i(\omega_j(x,y),\alpha_k(z))-\omega_i(\alpha_j(x),\omega_k(y,z))-\omega_i(\omega_j(y,x),\alpha_k(z))+\omega_i(\alpha_j(y),\omega_k(x,z))\\
\nonumber&=&(\partial_{\omega\omega}\omega_n+\partial_{\alpha\omega}\alpha_n)(x,y,z).
\end{eqnarray}
For all $x,y\in A$ and $n=1,2,\dots$,  we have
\begin{equation}\label{cocycle-2}
\sum_{i+j+k=n\atop i,j,k\geq 0}\omega_i(\alpha_j(x),\alpha_k(y))-\sum_{i+j=n\atop i,j\geq 0}\alpha_i(\omega_j(x,y))=0.
\end{equation}
Moreover, we have
\begin{equation}\label{n-sum-cocycle-2}
\sum_{i+j+k=n\atop 0<i,j,k\leq n-1}\omega_i(\alpha_j(x),\alpha_k(y))-\sum_{i+j=n\atop i,j>0}\alpha_i(\omega_j(x,y))=(\partial_{\omega\alpha}\omega_n+\partial_{\alpha\alpha}\alpha_n)(x,y).
\end{equation}
\begin{pro}\label{pro:2-cocycle-regular-rep}
Let $(\omega_t=\omega+\sum_{i=1}^{+\infty} \omega_it^i,\alpha_t=\alpha+\sum_{i=1}^{+\infty} \alpha_it^i)$ be a $1$-parameter formal deformation of a Hom-pre-Lie algebra $(A,\omega,\alpha)$. Then $(\omega_1,\alpha_1)$ is a $2$-cocycle of the Hom-pre-Lie algebra $(A,\omega,\alpha)$ with  coefficients in the regular representation.
\end{pro}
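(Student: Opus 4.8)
The plan is to read off both components of the cocycle condition directly from the deformation identities already recorded above, specialized to the first-order term. By the definition of the coboundary operator in \eqref{complex-cohomology-1}, saying that $(\omega_1,\alpha_1)\in\tilde{\huaC}^2(A;A)$ is a $2$-cocycle with coefficients in the regular representation means precisely the two equalities
$$\partial_{\omega\omega}\omega_1+\partial_{\alpha\omega}\alpha_1=0,\qquad \partial_{\omega\alpha}\omega_1+\partial_{\alpha\alpha}\alpha_1=0,$$
where $\rho=L$, $\mu=R$ and $\beta=\alpha$ for the regular representation $(A,\alpha,L,R)$. So it suffices to establish these two identities, and I would do so by invoking the already-reorganized deformation equations at $n=1$.

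First I would apply \eqref{n-sum-cocycle-1} with $n=1$. On the left-hand side the summation index ranges over triples with $0<i,j,k\leq n-1=0$, which is impossible, so the sum is empty and the left-hand side vanishes. Hence the right-hand side $(\partial_{\omega\omega}\omega_1+\partial_{\alpha\omega}\alpha_1)(x,y,z)$ equals zero for all $x,y,z\in A$, which is the first component of the cocycle condition. Next I would apply \eqref{n-sum-cocycle-2} with $n=1$: here the first sum again ranges over $0<i,j,k\leq 0$ and the second over $i,j>0$ with $i+j=1$, so both index sets are empty and the left-hand side is zero, giving $(\partial_{\omega\alpha}\omega_1+\partial_{\alpha\alpha}\alpha_1)(x,y)=0$ for all $x,y\in A$. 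Combining the two, $\tilde{\partial}_{\reg}(\omega_1,\alpha_1)=(\partial_{\omega\omega}\omega_1+\partial_{\alpha\omega}\alpha_1,\ \partial_{\omega\alpha}\omega_1+\partial_{\alpha\alpha}\alpha_1)=(0,0)$, so $(\omega_1,\alpha_1)$ is a $2$-cocycle.

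There is essentially no obstacle remaining at this stage; the genuine content sits inside the identities \eqref{n-sum-cocycle-1} and \eqref{n-sum-cocycle-2} themselves, which reorganize the order-$n$ part of the deformation equations \eqref{cocycle-1} and \eqref{cocycle-2} by peeling off the boundary contributions (those summands in which a single factor carries the full index $n$ while the remaining factors carry the undeformed $\omega$ and $\alpha$) and matching them with the regular-representation coboundary acting on $(\omega_n,\alpha_n)$. Once that bookkeeping is in place, the proposition is just the degenerate case $n=1$, in which no interior terms survive. If a fully self-contained argument were preferred, one could instead expand \eqref{cocycle-1} and \eqref{cocycle-2} directly at $n=1$: every summand then has exactly one index equal to $1$ and the other two equal to $0$, and checking these twelve (resp. five) terms against the explicit formulas for $\partial_{\omega\omega},\partial_{\alpha\omega}$ and for $\partial_{\omega\alpha},\partial_{\alpha\alpha}$ with $\rho=L,\mu=R,\beta=\alpha$ reproduces the same two vanishing identities.
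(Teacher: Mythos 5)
Your proof is correct and follows essentially the same route as the paper: the paper's own proof expands the order-one coefficient of the deformation equations \eqref{cocycle-1} and \eqref{cocycle-2} and matches the resulting twelve (resp.\ five) terms against $-(\partial_{\omega\omega}\omega_1+\partial_{\alpha\omega}\alpha_1)$ and $-(\partial_{\omega\alpha}\omega_1+\partial_{\alpha\alpha}\alpha_1)$, which is exactly the content of \eqref{n-sum-cocycle-1} and \eqref{n-sum-cocycle-2} at $n=1$ that you invoke, and which you also sketch as your fallback argument. The only caveat is that \eqref{n-sum-cocycle-1} and \eqref{n-sum-cocycle-2} are asserted in the paper without proof, so citing them shifts rather than removes the term-matching work; since you acknowledge this and outline the direct $n=1$ expansion (which is literally the paper's proof), nothing is missing.
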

\begin{proof}
When $n=1$, by \eqref{cocycle-1}, we have
\begin{eqnarray*}
\nonumber 0&=&(x\cdot y)\cdot\alpha_1(z)+\omega_1(x,y)\cdot\alpha(z)+\omega_1(x\cdot y,\alpha(z))-\alpha(x)\cdot\omega_1(y,z)\\
\nonumber &&-\alpha_1(x)\cdot (y\cdot z)-\omega_1(\alpha(x),y\cdot z)-(y\cdot x)\cdot\alpha_1(z)-\omega_1(y,x)\cdot\alpha(z)\\
\nonumber &&-\omega_1(y\cdot x,\alpha(z))+\alpha(y)\cdot\omega_1(x,z)+\alpha_1(y)\cdot(x\cdot z)+\omega_1(\alpha(y),x\cdot z)\\
&=&-(\partial_{\omega\omega}\omega_1+\partial_{\alpha\omega}\alpha_1)(x,y,z),
\end{eqnarray*}
and by \eqref{cocycle-2}, we have
\begin{eqnarray*}
\nonumber 0&=&\omega_1(\alpha(x),\alpha(y))+\alpha_1(x)\cdot \alpha(y)+\alpha(x)\cdot \alpha_1(y)-\alpha(\omega_1(x,y))-\alpha_1(x\cdot y)\\
&=&-(\partial_{\omega\alpha}\omega_1+\partial_{\alpha\alpha}\alpha_1)(x,y),
\end{eqnarray*}
which implies that $\tilde{\partial}_{\reg}(\omega_1,\alpha_1)=0$. Thus, $(\omega_1,\alpha_1)$ is a $2$-cocycle of the Hom-pre-Lie algebra $(A,\omega,\alpha)$.
\end{proof}

\begin{defi}
The $2$-cocycle $(\omega_1,\alpha_1)$ is called the {\bf infinitesimal} of the $1$-parameter formal deformation $(A[[t]],\omega_t,\alpha_t)$ of the Hom-pre-Lie algebra $(A,\omega,\alpha)$.
\end{defi}

\begin{defi}
Let $(\omega_t'=\omega+\sum_{i=1}^{+\infty} \omega_i't^i,\alpha_t'=\alpha+\sum_{i=1}^{+\infty} \alpha_i't^i)$ and $(\omega_t=\omega+\sum_{i=1}^{+\infty} \omega_it^i,\alpha_t=\alpha+\sum_{i=1}^{+\infty} \alpha_it^i)$ be two $1$-parameter formal deformations of a Hom-pre-Lie algebra $(A,\omega,\alpha)$. A {\bf formal isomorphism} from $(A[[t]],\omega_t',\alpha_t')$ to $(A[[t]],\omega_t,\alpha_t)$ is a power series $\Phi_t=\sum_{i=0}^{+\infty} \varphi_it^i$, where $\varphi_i:A\longrightarrow A$ are linear maps with $\varphi_0={\Id}$, such that
\begin{eqnarray}
 \Phi_t\circ \omega_t'&=&\omega_t \circ(\Phi_t\times \Phi_t),\\
   \alpha_t\circ \Phi_t&=&\Phi_t\circ \alpha_t'.
\end{eqnarray}
Two $1$-parameter formal deformations $(A[[t]],\omega_t',\alpha_t')$ and $(A[[t]],\omega_t,\alpha_t)$ are said to be {\bf equivalent} if there exists a
formal isomorphism $\Phi_t=\sum_{i=0}^{+\infty} \varphi_it^i$ from $(A[[t]],\omega_t',\alpha_t')$ to $(A[[t]],\omega_t,\alpha_t)$.
\end{defi}

\begin{thm}\label{thm:iso3} Let $(A,\omega,\alpha)$ be a Hom-pre-Lie algebra.
If two $1$-parameter formal deformations $(\omega_t'=\omega+\sum_{i=1}^{+\infty} \omega_i't^i,\alpha_t'=\alpha+\sum_{i=1}^{+\infty} \alpha_i't^i)$ and $(\omega_t=\omega+\sum_{i=1}^{+\infty} \omega_it^i,\alpha_t=\alpha+\sum_{i=1}^{+\infty} \alpha_it^i)$ are equivalent, then there infinitesimals $(\omega_1',\alpha_1')$ and $(\omega_1,\alpha_1)$ are in the same cohomology class of $\tilde{H}^2(A;A)$.
\end{thm}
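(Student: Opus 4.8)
The plan is to peel off the coefficient of $t^1$ from the two equations that define a formal isomorphism and to recognize the outcome as $\tilde{\partial}_{\reg}(\varphi_1)$, where $\varphi_1$ is the linear coefficient of $\Phi_t=\sum_{i\geq 0}\varphi_it^i$. Since $\varphi_0=\Id$, the map $\varphi_1\in\Hom(A,A)=\tilde{\huaC}^1(A;A)$ is the natural candidate for a $1$-cochain whose coboundary measures the difference of the two infinitesimals.

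First I would expand the multiplicative compatibility $\Phi_t\circ\omega_t'=\omega_t\circ(\Phi_t\times\Phi_t)$ as an identity of formal power series, using $\omega_0=\omega_0'=\omega$ and $\varphi_0=\Id$. Collecting the $t^1$-terms gives $\omega_1'(x,y)+\varphi_1(x\cdot y)$ on the left and $\omega_1(x,y)+\varphi_1(x)\cdot y+x\cdot\varphi_1(y)$ on the right, whence
$$\omega_1'(x,y)-\omega_1(x,y)=\varphi_1(x)\cdot y+x\cdot\varphi_1(y)-\varphi_1(x\cdot y).$$
I would then specialize the operator $\partial_{\omega\omega}$ to the regular representation, where $\rho=L$, $\mu=R$ and $\beta=\alpha$, and evaluate it at $n=1$. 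In that degree $\alpha^{n-1}=\Id$, the double-sum over $1\leq i<j\leq 1$ is empty, and the three surviving terms yield precisely $(\partial_{\omega\omega}\varphi_1)(x,y)=x\cdot\varphi_1(y)+\varphi_1(x)\cdot y-\varphi_1(x\cdot y)$. Comparing the two displays gives $\omega_1'-\omega_1=\partial_{\omega\omega}\varphi_1$.

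Next I would expand the compatibility with the twisting maps, $\alpha_t\circ\Phi_t=\Phi_t\circ\alpha_t'$, using $\alpha_0=\alpha_0'=\alpha$. Its $t^1$-coefficient reads $\alpha(\varphi_1(x))+\alpha_1(x)=\alpha_1'(x)+\varphi_1(\alpha(x))$, so $\alpha_1'-\alpha_1=\alpha\circ\varphi_1-\varphi_1\circ\alpha$, which is exactly $(\partial_{\omega\alpha}\varphi_1)(x)=\beta\varphi_1(x)-\varphi_1(\alpha(x))$ once $\beta=\alpha$. Assembling the two computed identities and invoking \eqref{complex-cohomology-2}, I obtain $(\omega_1'-\omega_1,\ \alpha_1'-\alpha_1)=(\partial_{\omega\omega}\varphi_1,\ \partial_{\omega\alpha}\varphi_1)=\tilde{\partial}_{\reg}(\varphi_1)$. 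Hence the difference of the two infinitesimals is a coboundary, so $(\omega_1',\alpha_1')$ and $(\omega_1,\alpha_1)$ define the same class in $\tilde{H}^2(A;A)$.

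The one genuinely delicate point is bookkeeping rather than ideas: I must extract the $t^1$-coefficient of the bilinear equation $\Phi_t\circ\omega_t'=\omega_t\circ(\Phi_t\times\Phi_t)$ correctly (three contributing terms, since two factors of $\Phi_t$ appear on the right), and I must check that the $n=1$ specialization of $\partial_{\omega\omega}$ to the regular representation produces exactly the three matching terms with the correct signs. Everything else is a direct comparison, and no cocycle condition on $(\omega_1,\alpha_1)$ is even needed for this argument.
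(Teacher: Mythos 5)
Your proposal is correct and follows essentially the same route as the paper: extract the $t^1$-coefficients of the two identities defining a formal isomorphism and recognize them as $\partial_{\omega\omega}\varphi_1$ and $\partial_{\omega\alpha}\varphi_1$ in the regular representation (the paper writes the first identity in the inverted form $\omega_t'=\Phi_t^{-1}\circ\omega_t\circ(\Phi_t\times\Phi_t)$, which is a purely cosmetic difference from your uninverted version). One small caveat: to speak of the \emph{same class} in $\tilde{H}^2(A;A)$ one still needs that $(\omega_1,\alpha_1)$ and $(\omega_1',\alpha_1')$ lie in $\tilde{Z}^2(A;A)$ --- the paper cites Proposition \ref{pro:2-cocycle-regular-rep} for exactly this --- so your closing remark that no cocycle condition is needed is accurate only for the coboundary computation, not for the conclusion as stated.
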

\begin{proof}
Let $(\omega_t',\alpha_t')$ and $(\omega_t,\alpha_t)$ be two $1$-parameter formal deformations. By Proposition \ref{pro:2-cocycle-regular-rep}, we have $(\omega_1',\alpha_1')$ and $(\omega_1,\alpha_1) \in \tilde{Z}^2(A;A)$. Let  $\Phi_t=\sum_{i=0}^{+\infty} \varphi_it^i$ be the formal isomorphism. Then for all $x,y\in A$, we have
\begin{eqnarray*}
\omega_t'(x,y)
&=&\Phi_t^{-1}\circ\omega_t(\Phi_t(x),\Phi_t(y))\\
&=&({\Id}-\varphi_1 t+\dots)\omega_t\big(x+\varphi_1(x)t+\dots,y+\varphi_1(y)t+\dots\big)\\
&=&({\Id}-\varphi_1 t+\dots)\Big(x\cdot y+\big(x\cdot\varphi_1(y)+\varphi_1(x)\cdot y+\omega_1(x,y)\big)t+\dots\Big)\\
&=&x\cdot y+\Big(x\cdot\varphi_1(y)+\varphi_1(x)\cdot y+\omega_1(x,y)-\varphi_1(x\cdot y)\Big)t+\dots.
\end{eqnarray*}
Thus, we have
\begin{eqnarray*}
\omega_1'(x,y)-\omega_1(x,y)&=&x\cdot\varphi_1(y)+\varphi_1(x)\cdot y-\varphi_1(x\cdot y)\\
&=&\partial_{\omega\omega}\varphi_1(x,y),
\end{eqnarray*}
which implies that $\omega_1'-\omega_1=\partial_{\omega\omega}\varphi_1.$

For all $x\in A$, we have
\begin{eqnarray*}
\alpha_t'(x)
&=&\Phi_t^{-1}\circ\alpha_t(\Phi_t(x))\\
&=&({\Id}-\varphi_1 t+\dots)\alpha_t\big(x+\varphi_1(x)t+\dots\big)\\
&=&({\Id}-\varphi_1 t+\dots)\Big(\alpha(x)+\big(\alpha(\varphi_1(x))+\alpha_1(x))t+\dots\Big)\\
&=&\alpha(x)+\Big(\alpha(\varphi_1(x))+\alpha_1(x)-\varphi_1(\alpha(x))\Big)t+\dots.
\end{eqnarray*}
Thus, we have
\begin{equation}
\nonumber\alpha_1'(x)-\alpha_1(x)=\alpha(\varphi_1(x))-\varphi_1(\alpha(x))=\partial_{\omega\alpha}\varphi_1(x),
\end{equation}
which implies that $\alpha_1'-\alpha_1=\partial_{\omega\alpha}\varphi_1.$

Thus, we have $(\omega_1'-\omega_1,\alpha_1'-\alpha_1)\in \tilde{B}^2(A;A)$. This finishes the proof.
\end{proof}

\begin{defi}
A $1$-parameter formal deformation $(A[[t]],\omega_t,\alpha_t)$ of a Hom-pre-Lie algebra $(A,\omega,\alpha)$ is said to be {\bf trivial} if it is equivalent to $(A,\omega,\alpha)$, i.e. there exists $\Phi_t=\sum_{i=0}^{+\infty} \varphi_it^i$, where $\varphi_i:A\longrightarrow A$ are linear maps with $\varphi_0={\Id}$, such that
\begin{eqnarray}
 \Phi_t\circ \omega_t&=&\omega \circ(\Phi_t\times \Phi_t),\\
 \alpha \circ\Phi_t&=&\Phi_t\circ\alpha_t.
\end{eqnarray}

\end{defi}
\begin{defi}
Let $(A,\omega,\alpha)$ be a Hom-pre-Lie algebra. If all  $1$-parameter formal deformations are trivial, we say that $(A,\omega,\alpha)$ is {\bf rigid}.
\end{defi}

\begin{thm}
Let $(A,\omega,\alpha)$ be a Hom-pre-Lie algebra. If $\tilde{H}^2(A;A)=0$, then $(A,\omega,\alpha)$ is rigid.
\end{thm}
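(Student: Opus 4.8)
The plan is to show that every $1$-parameter formal deformation $(\omega_t,\alpha_t)$ of $(A,\omega,\alpha)$ is equivalent to the trivial one by removing its lowest nonzero term order by order through formal isomorphisms. I proceed by induction on the order of the first term in which $(\omega_t,\alpha_t)$ differs from $(\omega,\alpha)$. The inductive claim is that if a deformation has the form $\omega_t=\omega+\omega_n t^n+O(t^{n+1})$ and $\alpha_t=\alpha+\alpha_n t^n+O(t^{n+1})$ for some $n\geq 1$, then it is equivalent to a deformation whose lowest nonzero term lies in order $\geq n+1$. Since every deformation trivially has this form with $n=1$, iterating the claim yields equivalence to $(\omega,\alpha)$.

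For the inductive step I first observe that $(\omega_n,\alpha_n)$ is a $2$-cocycle with coefficients in the regular representation. Indeed, since $\omega_i=0$ and $\alpha_i=0$ for $0<i<n$, every summand on the left-hand sides of \eqref{n-sum-cocycle-1} and \eqref{n-sum-cocycle-2} carries a factor $\omega_i$ or $\alpha_i$ with index $0<i\leq n-1$ and therefore vanishes; hence these equations collapse to $\partial_{\omega\omega}\omega_n+\partial_{\alpha\omega}\alpha_n=0$ and $\partial_{\omega\alpha}\omega_n+\partial_{\alpha\alpha}\alpha_n=0$, that is, $\tilde{\partial}_{\reg}(\omega_n,\alpha_n)=0$. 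This generalizes Proposition \ref{pro:2-cocycle-regular-rep}, which is the case $n=1$. Because $\tilde{H}^2(A;A)=0$, the cocycle is a coboundary: there is a $1$-cochain $\varphi_n\in\Hom(A,A)$ with $(\omega_n,\alpha_n)=\tilde{\partial}_{\reg}(\varphi_n)=(\partial_{\omega\omega}\varphi_n,\partial_{\omega\alpha}\varphi_n)$.

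I then apply the formal isomorphism $\Phi_t=\Id-\varphi_n t^n$, which is invertible over $A[[t]]$ with $\Phi_t^{-1}=\Id+\varphi_n t^n+O(t^{2n})$, and define the equivalent deformation $\bar{\omega}_t(x,y)=\Phi_t^{-1}\omega_t(\Phi_t x,\Phi_t y)$ and $\bar{\alpha}_t=\Phi_t^{-1}\circ\alpha_t\circ\Phi_t$. Expanding to order $t^n$ by the same computation as in Theorem \ref{thm:iso3}, the order-$t^n$ coefficients become $\bar{\omega}_n=\omega_n-\partial_{\omega\omega}\varphi_n=0$ and $\bar{\alpha}_n=\alpha_n-\partial_{\omega\alpha}\varphi_n=0$, while all lower-order terms are unchanged because $\Phi_t$ differs from $\Id$ only in order $\geq n$. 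Thus $(\bar{\omega}_t,\bar{\alpha}_t)$ agrees with $(\omega,\alpha)$ through order $n$, completing the inductive step.

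Finally I assemble the infinitely many corrections into a single formal isomorphism onto the trivial deformation. Writing $\Phi_t^{(n)}=\Id-\varphi_n t^n$ for the map produced at stage $n$, the infinite composite $\cdots\circ\Phi_t^{(n)}\circ\cdots\circ\Phi_t^{(1)}$ is a well-defined element of $\End(A)[[t]]$: at each fixed order $t^m$ only the factors with index $\leq m$ contribute, so the coefficient of $t^m$ stabilizes after finitely many compositions. The resulting $\Phi_t$ satisfies $\varphi_0=\Id$ and carries $(\omega_t,\alpha_t)$ to $(\omega,\alpha)$, so the deformation is trivial; as it was arbitrary, $(A,\omega,\alpha)$ is rigid. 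The main obstacle here is bookkeeping rather than conceptual: one must track which orders are affected at each stage so that the coboundary extracted at order $n$ does not disturb the already-normalized lower orders, and one must verify that the infinite composition converges in the $t$-adic topology.
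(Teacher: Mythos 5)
Your proof is correct and follows essentially the same route as the paper's: identify the lowest-order nonvanishing pair $(\omega_n,\alpha_n)$, observe via \eqref{n-sum-cocycle-1} and \eqref{n-sum-cocycle-2} that it is a $2$-cocycle, use $\tilde{H}^2(A;A)=0$ to write it as $\tilde{\partial}_{\reg}\varphi_n$, and kill it with the formal isomorphism $\Id\mp\varphi_n t^n$, iterating upward (your sign convention differs trivially from the paper's $\omega_n=\partial_{\omega\omega}(-\varphi_n)$). The only added value is that you make the paper's closing phrase ``keep repeating the process'' rigorous by checking $t$-adic convergence of the infinite composite of the stage-$n$ isomorphisms, which is a worthwhile but minor refinement rather than a different argument.
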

\begin{proof}
Let $(\omega_t=\omega+\sum_{i=1}^{+\infty} \omega_it^i,\alpha_t=\alpha+\sum_{i=1}^{+\infty} \alpha_it^i)$ be a $1$-parameter formal deformation and assume that $n\geq1$ is the minimal number such that at least one of $\omega_n$ and $\alpha_n$ is not zero. By \eqref{n-sum-cocycle-1}, \eqref{n-sum-cocycle-2} and $\tilde{H}^2(A;A)=0$, we have $(\omega_n,\alpha_n)\in \tilde{B}^2(A;A)$. Thus, there exists $\varphi_n \in \tilde{\huaC}^1(A;A)$ such that $\omega_n=\partial_{\omega\omega}(-\varphi_n)$ and $\alpha_n=\partial_{\omega\alpha}(-\varphi_n)$. Let $\Phi_t={\Id}+\varphi_nt^n$ and define a new formal deformation $(\omega_t',\alpha_t')$ by $\omega_t'(x,y)=\Phi_t^{-1}\circ\omega_t(\Phi_t(x),\Phi_t(y))$ and $\alpha_t'(x)=\Phi_t^{-1}\circ\alpha_t(\Phi_t(x))$. Then $(\omega_t',\alpha_t')$ and $(\omega_t,\alpha_t)$ are equivalent. By straightforward computation, for all $x,y\in A$, we have
\begin{eqnarray*}
\omega_t'(x,y)
&=&\Phi_t^{-1}\circ\omega_t(\Phi_t(x),\Phi_t(y))\\
&=&({\Id}-\varphi_n t^n+\dots)\omega_t\big(x+\varphi_n(x)t^n,y+\varphi_n(y)t^n\big)\\
&=&({\Id}-\varphi_n t^n+\dots)\Big(x\cdot y+\big(x\cdot\varphi_n(y)+\varphi_n(x)\cdot y+\omega_n(x,y)\big)t^n+\dots\Big)\\
&=&x\cdot y+\Big(x\cdot\varphi_n(y)+\varphi_n(x)\cdot y+\omega_n(x,y)-\varphi_n(x\cdot y)\Big)t^n+\dots.
\end{eqnarray*}
Thus, we have $\omega_1'=\omega_2'=\dots=\omega_{n-1}'=0$. Moreover, we have
\begin{eqnarray*}
\omega_n'(x,y)&=&\varphi_n(x)\cdot y+x\cdot \varphi_n(y)+\omega_n(x,y)-\varphi_n(x\cdot y)\\
&=&\partial_{\omega\omega}\varphi_n(x,y)+\omega_n(x,y)\\
&=&0.
\end{eqnarray*}
For all $x\in A$, we have
\begin{eqnarray*}
\alpha_t'(x)
&=&\Phi_t^{-1}\circ\alpha_t(\Phi_t(x))\\
&=&({\Id}-\varphi_n t^n+\dots)\alpha_t\big(x+\varphi_n(x)t^n\big)\\
&=&({\Id}-\varphi_1 t^n+\dots)\Big(\alpha(x)+\big(\alpha(\varphi_n(x))+\alpha_n(x))t^n+\dots\Big)\\
&=&\alpha(x)+\Big(\alpha(\varphi_n(x))+\alpha_n(x)-\varphi_n(\alpha(x))\Big)t^n+\dots.
\end{eqnarray*}
Thus, we have $\alpha_1'=\alpha_2'=\dots=\alpha_{n-1}'=0$. Moreover, we have
\begin{eqnarray*}
\alpha_n'(x,y)&=&\alpha(\varphi_n(x))+\alpha_n(x)-\varphi_n(\alpha(x))\\
&=&\partial_{\omega\alpha}\varphi_n(x)+\alpha_n(x)\\
&=&0.
\end{eqnarray*}
Keep repeating the process, we obtain that $(A[[t]],\omega_t,\alpha_t)$ is equivalent to $(A,\omega,\alpha)$. The proof is finished.
\end{proof}
At the end of this section, we recall  $1$-parameter formal deformations of Hom-Lie algebras, and establish the relation between  $1$-parameter formal deformations of Hom-pre-Lie algebras and  $1$-parameter formal deformations of Hom-Lie algebras.
\begin{defi}{\rm(\cite{Bene})}
Let $(\g,[\cdot,\cdot]_{\g},\phi_{\g})$ be a Hom-Lie algebra, $[\cdot,\cdot]_t=[\cdot,\cdot]_{\g}+\sum_{i=1}^{+\infty} \bar{\omega}_it^i:\g[[t]]\wedge \g[[t]]\longrightarrow \g[[t]]$ be a $\mathbb K[[t]]$-bilinear map and $\phi_t=\phi_{\g}+\sum_{i=1}^{+\infty} \phi_it^i:\g[[t]]\longrightarrow \g[[t]]$ be a $\mathbb K[[t]]$-linear map, where $\bar{\omega}_i:\g\otimes \g\longrightarrow \g$ and $\phi_i:\g\longrightarrow \g$ are linear maps. If $(\g[[t]],[\cdot,\cdot]_t,\phi_t)$ is still a Hom-Lie algebra, we say that $\{\bar{\omega}_i,\phi_i\}_{i\geq1}$ generates a {\bf $1$-parameter formal deformation} of a Hom-Lie algebra $(\g,[\cdot,\cdot]_{\g},\phi_{\g})$.
\end{defi}

\begin{pro}
Let $(\omega_t=\omega+\sum_{i=1}^{+\infty} \omega_it^i,\alpha_t=\alpha+\sum_{i=1}^{+\infty} \alpha_it^i)$ be a $1$-parameter formal deformation of a Hom-pre-Lie algebra $(A,\omega,\alpha)$, then $$\{\omega_i-\omega_i\circ \sigma,\alpha_i\}_{i\geq1}$$ generates a $1$-parameter formal deformation of the sub-adjacent Hom-Lie algebra $A^C$, where $\sigma:A\otimes A\longrightarrow A\otimes A$ is the flip operator defined by $\sigma(x\otimes y)=y\otimes x$ for all $x,y\in A$.
\end{pro}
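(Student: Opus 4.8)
The plan is to exploit the functoriality of the sub-adjacent construction: the passage from a Hom-pre-Lie algebra to its commutator Hom-Lie algebra commutes with extension of scalars from $\mathbb{K}$ to $\mathbb{K}[[t]]$, so the desired deformation of $A^C$ is simply the sub-adjacent Hom-Lie algebra of the deformed Hom-pre-Lie algebra $(A[[t]],\omega_t,\alpha_t)$. First I would observe that, by hypothesis, $(A[[t]],\omega_t,\alpha_t)$ is a Hom-pre-Lie algebra over the ring $\mathbb{K}[[t]]$. The fact recalled just after the definition of the sub-adjacent Hom-Lie algebra, namely that the commutator of any Hom-pre-Lie algebra is a Hom-Lie algebra, consists of multilinear identities and therefore holds verbatim over any commutative base ring; applying it to $(A[[t]],\omega_t,\alpha_t)$ produces a Hom-Lie algebra $(A[[t]],[\cdot,\cdot]_{C,t},\alpha_t)$, where $[u,v]_{C,t}=\omega_t(u,v)-\omega_t(v,u)$ for $u,v\in A[[t]]$.

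Next I would read off the coefficients. Since $\sigma$ is the flip operator, for $x,y\in A$ one has
\[
[x,y]_{C,t}=\omega_t(x,y)-\omega_t(y,x)=[x,y]_C+\sum_{i=1}^{+\infty}\big(\omega_i(x,y)-\omega_i(y,x)\big)t^i=[x,y]_C+\sum_{i=1}^{+\infty}(\omega_i-\omega_i\circ\sigma)(x,y)\,t^i.
\]
Thus the deformed bracket is exactly $[\cdot,\cdot]_C+\sum_{i\geq1}\bar{\omega}_i t^i$ with $\bar{\omega}_i=\omega_i-\omega_i\circ\sigma$, and the deformed twist is exactly $\phi_t=\alpha_t=\alpha+\sum_{i\geq1}\alpha_i t^i$. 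Each $\bar{\omega}_i$ is a $\mathbb{K}$-bilinear map $A\otimes A\longrightarrow A$ and is manifestly skew-symmetric, so that $[\cdot,\cdot]_{C,t}$ is a $\mathbb{K}[[t]]$-bilinear map $A[[t]]\wedge A[[t]]\longrightarrow A[[t]]$, exactly as required in the definition of a formal deformation of a Hom-Lie algebra.

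Finally I would note that at $t=0$ the bracket reduces to $[\cdot,\cdot]_C$ and the twist to $\alpha$, i.e. to the sub-adjacent Hom-Lie algebra $A^C$. The homomorphism property $\alpha_t([u,v]_{C,t})=[\alpha_t(u),\alpha_t(v)]_{C,t}$ follows from the multiplicativity $\alpha_t(\omega_t(u,v))=\omega_t(\alpha_t(u),\alpha_t(v))$ by antisymmetrizing, and the Hom-Jacobi identity was already secured in the first step; together these show that $(A[[t]],[\cdot,\cdot]_{C,t},\alpha_t)$ is a Hom-Lie algebra over $\mathbb{K}[[t]]$ reducing to $A^C$ at $t=0$, which is precisely the assertion that $\{\omega_i-\omega_i\circ\sigma,\alpha_i\}_{i\geq1}$ generates a $1$-parameter formal deformation of $A^C$. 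There is essentially no computation beyond the bookkeeping above; the only point deserving explicit mention, and the mildest conceivable obstacle, is the verification that the sub-adjacent construction is genuinely compatible with the base change to $\mathbb{K}[[t]]$, which is immediate since every defining axiom is a multilinear identity that extends $\mathbb{K}[[t]]$-linearly.
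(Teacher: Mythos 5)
Your proposal is correct and follows essentially the same route as the paper: both pass through the sub-adjacent Hom-Lie algebra of the deformed Hom-pre-Lie algebra $(A[[t]],\omega_t,\alpha_t)$, expand the commutator $\omega_t(x,y)-\omega_t(y,x)$ to identify the deformed bracket coefficients as $\omega_i-\omega_i\circ\sigma$, and check multiplicativity of $\alpha_t$ by antisymmetrizing $\alpha_t\circ\omega_t=\omega_t\circ(\alpha_t\times\alpha_t)$. Your extra remark that the sub-adjacent construction extends verbatim over the base ring $\mathbb{K}[[t]]$ simply makes explicit what the paper leaves implicit.
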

\begin{proof}
Let $(A[[t]],\omega_t,\alpha_t)$ be a  $1$-parameter formal deformation of a Hom-pre-Lie algebra $(A,\omega,\alpha)$. For all $x,y\in A$, we have
\begin{eqnarray*}
\omega_t(x,y)-\omega_t(y,x)&=&\omega(x,y)-\omega(y,x)+\sum_{i=1}^{+\infty} \omega_i(x,y)t^i-\sum_{i=1}^{+\infty} \omega_i(y,x)t^i\\
&=&[x,y]_C+\sum_{i=1}^{+\infty}(\omega_i-\omega_i\circ \sigma)(x,y)t^i,
\end{eqnarray*}
and
\begin{equation}
\nonumber\alpha_t\circ(\omega_t(x,y)-\omega_t(y,x))=\omega_t(\alpha_t(x),\alpha_t(y))-\omega_t(\alpha_t(y),\alpha_t(x)).
\end{equation}
Therefore, $\{\omega_i-\omega_i\circ \sigma,\alpha_i\}_{i\geq1}$ generates a $1$-parameter formal deformation of the sub-adjacent Hom-Lie algebra $A^C$.
\end{proof}

\section{Abelian extensions of Hom-pre-Lie algebras}\label{sec:ext}
In this section, we study abelian extensions of Hom-pre-Lie algebras using the cohomological approach. We show that abelian extensions are classified by the cohomology group $\tilde{H}^2(A;V)$.

\begin{defi}
Let $(A,\cdot,\alpha)$ and $(V,\cdot_V,\beta)$ be two Hom-pre-Lie algebras. An {\bf   extension} of $(A,\cdot,\alpha)$ by $(V,\cdot_V,\beta)$ is a short exact sequence of Hom-pre-Lie algebra morphisms:
$$\xymatrix{
  0 \ar[r] &V \ar[d]_{\beta}\ar[r]^{\iota}& \hat{A}\ar[d]_{\alpha_{\hat{A}}}\ar[r]^{p}&A\ar[d]_{\alpha}\ar[r]&0\\
     0\ar[r] &V \ar[r]^{ \iota} &\hat{A}\ar[r]^{p} &A\ar[r]&0,              }$$
where $(\hat{A},\cdot_{\hat{A}},\alpha_{\hat{A}})$ is a Hom-pre-Lie algebra.

It is called an {\bf abelian extension} if  $(V,\cdot_V,\beta)$ is an abelian Hom-pre-Lie algebra, i.e.   for all $u,v\in V, u\cdot_V v=0$.
\end{defi}
\begin{defi}
A {\bf section} of an extension $(\hat{A},\cdot_{\hat{A}},\alpha_{\hat{A}})$ of a Hom-pre-Lie algebra $(A,\cdot,\alpha)$ by  $(V,\cdot_V,\beta)$ is a linear map $s:A\longrightarrow \hat{A}$ such that $p\circ s=\Id_A$.
\end{defi}

Let $(\hat{A},\cdot_{\hat{A}},\alpha_{\hat{A}})$ be an abelian extension of a Hom-pre-Lie algebra $(A,\cdot,\alpha)$ by $(V,\beta)$ and $s:A\longrightarrow \hat{A}$ a section. For all $x,y\in A$, define linear maps $\theta:A\otimes A\longrightarrow V$ and $\xi:A\longrightarrow V$ respectively by
\begin{eqnarray}
\label{product}\theta(x,y)&=&s(x)\cdot_{\hat{A}}s(y)-s(x\cdot y),\\
\label{morphism}\xi(x)&=&\alpha_{\hat{A}}(s(x))-s(\alpha(x)).
\end{eqnarray}
And for all $x,y\in A, u\in V$, define $\rho,\mu:A\longrightarrow\gl(V)$ respectively by
\begin{eqnarray}
\label{Hom-pre-Lie-representation-1}\rho(x)(u)&=&s(x)\cdot_{\hat{A}} u,\\
\label{Hom-pre-Lie-representation-2}\mu(x)(u)&=&u\cdot_{\hat{A}} s(x).
\end{eqnarray}
Obviously, $\hat{A}$ is isomorphic to $A\oplus V$ as vector spaces. Transfer the Hom-pre-Lie algebra structure on $\hat{A}$ to that on $A\oplus V$, we obtain a Hom-pre-Lie algebra $(A\oplus V,\diamond,\phi)$, where $\diamond$ and $\phi$ are given by
\begin{eqnarray}
  \label{eq:6.1}(x+u)\diamond (y+v)&=&x\cdot y+\theta(x,y)+\rho(x)(v)+\mu(y)(u),\quad \forall ~x,y\in A, u,v\in V,\\
   \label{eq:6.2}\phi(x+u)&=&\alpha(x)+\xi(x)+\beta(u), \quad \forall ~x\in A, u\in V.
\end{eqnarray}

\begin{thm}\label{thm:cocycle}
With the above notations, we have
\begin{itemize}
\item[$\rm(i)$] $(V,\beta,\rho,\mu)$ is a representation of the Hom-pre-Lie algebra $(A,\cdot,\alpha)$,
\item[$\rm(ii)$] $(\theta,\xi)$ is a $2$-cocycle of the Hom-pre-Lie algebra $(A,\cdot,\alpha)$ with coefficients in the representation $(V,\beta,\rho,\mu)$.
\end{itemize}
\end{thm}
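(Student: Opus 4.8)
The plan is to exploit the fact that, since $(\hat A,\cdot_{\hat A},\alpha_{\hat A})$ is a Hom-pre-Lie algebra and is identified with $(A\oplus V,\diamond,\phi)$ via the transported structure, the triple $(A\oplus V,\diamond,\phi)$ is \emph{itself} a Hom-pre-Lie algebra. Hence it satisfies the two defining axioms: the multiplicativity $\phi\big((x+u)\diamond(y+v)\big)=\phi(x+u)\diamond\phi(y+v)$ and the Hom-pre-Lie (left-symmetry) identity. The entire proof consists in expanding these two axioms by means of the explicit formulas \eqref{eq:6.1}--\eqref{eq:6.2}, projecting the resulting equalities onto the $V$-component, and then separating each equality according to its dependence on the ``$V$-slots''. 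Because the extension is abelian, every product of two elements of $V$ vanishes, so all expansions stay finite and each $V$-input occurs at most linearly; since the axioms hold for all $u,v,w\in V$, comparing the coefficients of these independent $V$-variables splits one Hom-pre-Lie axiom into several scalar identities.

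First I would treat the multiplicativity axiom. Setting $X=x+u$, $Y=y+v$ and expanding both sides with \eqref{eq:6.1}--\eqref{eq:6.2}, the $A$-components agree automatically (they reduce to $\alpha(x\cdot y)=\alpha(x)\cdot\alpha(y)$). Comparing the $V$-components and reading off coefficients yields three statements: the coefficient of $v$ gives $\beta\circ\rho(x)=\rho(\alpha(x))\circ\beta$, i.e. \eqref{hom-lie-rep-1}; the coefficient of $u$ gives $\beta\circ\mu(y)=\mu(\alpha(y))\circ\beta$, i.e. \eqref{rep-1}; and the constant term (with $u=v=0$) gives $\beta\theta(x,y)-\theta(\alpha(x),\alpha(y))=\rho(\alpha(x))\xi(y)+\mu(\alpha(y))\xi(x)-\xi(x\cdot y)$, which, upon comparison with the $n=2$ instances of $\partial_{\omega\alpha}$ and $\partial_{\alpha\alpha}$, is precisely $\partial_{\omega\alpha}\theta+\partial_{\alpha\alpha}\xi=0$.

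Next I would treat the Hom-pre-Lie identity. Writing $F(X,Y,Z)=(X\diamond Y)\diamond\phi(Z)-\phi(X)\diamond\big(Y\diamond Z\big)$ with $X=x+u,\,Y=y+v,\,Z=z+w$, the axiom reads $F(X,Y,Z)=F(Y,X,Z)$. Again the $A$-component reproduces the known identity on $A$, and the $V$-component, separated by $V$-slots, produces the remaining conditions: the coefficient of $w$ gives $\rho([x,y]_C)\circ\beta=\rho(\alpha(x))\circ\rho(y)-\rho(\alpha(y))\circ\rho(x)$, i.e. \eqref{hom-lie-rep-2}, and the coefficients of $u$ and of $v$ each give \eqref{rep-2} (after relabelling). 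This completes the verification that $(V,\beta,\rho,\mu)$ is a representation, proving $\rm(i)$. Finally, the constant term of $F(X,Y,Z)-F(Y,X,Z)=0$, after collapsing $x\cdot y-y\cdot x$ into $[x,y]_C$ in the first slot of $\theta$ and of $\rho$, matches term-by-term the $n=2$ expressions of $\partial_{\omega\omega}\theta$ and $\partial_{\alpha\omega}\xi$, giving $\partial_{\omega\omega}\theta+\partial_{\alpha\omega}\xi=0$. Together with the preceding paragraph and \eqref{complex-cohomology-1} this says $\tilde\partial(\theta,\xi)=0$, proving $\rm(ii)$.

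The routine but error-prone part, which I expect to be the main obstacle, is the bookkeeping in this last step: one must carefully match the six constant terms coming from $F(X,Y,Z)-F(Y,X,Z)$ against the sign and index conventions of $\partial_{\omega\omega}$ (with its hat/append pattern in the $\mu$-sum) and of $\partial_{\alpha\omega}$ at $n=2$, where $\alpha^{n-2}=\Id$. Tracking which products skew-symmetrize into $[x,y]_C$ and keeping the signs $(-1)^{i+j}$ straight is where the care lies; the abelian hypothesis is exactly what guarantees that no further $\mu$-$\mu$ or double-$\theta$ cross terms survive to spoil the matching.
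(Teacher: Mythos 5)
Your proposal is correct and takes essentially the same approach as the paper: both exploit that the transported structure $(A\oplus V,\diamond,\phi)$ is itself a Hom-pre-Lie algebra, expand its two defining axioms via \eqref{eq:6.1}--\eqref{eq:6.2}, and split the resulting identities according to which arguments lie in $V$, obtaining \eqref{rep-1}--\eqref{rep-2} and the two cocycle conditions $\partial_{\omega\alpha}\theta+\partial_{\alpha\alpha}\xi=0$, $\partial_{\omega\omega}\theta+\partial_{\alpha\omega}\xi=0$. The only cosmetic difference is that the paper specializes the arguments (one element of $V$ at a time, or all in $A$) rather than comparing coefficients of general elements $x+u$, which is equivalent by multilinearity and the abelian hypothesis.
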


\begin{proof}
For all $x\in A$, $v\in V$, by the definition of a Hom-pre-Lie algebra, we have
\begin{equation}
\nonumber\phi(x\diamond v)-\phi(x)\diamond \phi(v)=\beta(\rho(x)(v))-\rho(\alpha(x))\beta(v)=0,
\end{equation}
which implies that
\begin{equation}\label{representation-1}
\beta\circ \rho(x)=\rho(\alpha(x))\circ\beta.
\end{equation}
Similarly, we have
\begin{equation}\label{representation-2}
\beta\circ \mu(x)=\mu(\alpha(x))\circ\beta.
\end{equation}
For all $x,y\in A$, $v\in V$, by the definition of a Hom-pre-Lie algebra, we have
\begin{eqnarray}\nonumber&&(x\diamond y)\diamond\phi(v)-\phi(x)\diamond (y\diamond v)-(y\diamond x)\diamond\phi(v)+\phi(y)\diamond (x\diamond v)\\
\nonumber &=&(x\cdot y+\theta(x,y))\diamond \beta(v)-(\alpha(x)+\xi(x))\diamond\rho(y)(v)\\
\nonumber &&-(y\cdot x+\theta(y,x))\diamond \beta(v)+(\alpha(y)+\xi(y))\diamond\rho(x)(v)\\
\nonumber &=&\rho(x\cdot y)\beta(v)-\rho(\alpha(x))\rho(y)(v)-\rho(y\cdot x)\beta(v)+\rho(\alpha(y))\rho(x)(v)\\
\nonumber &=&\rho([x,y]_C)\beta(v)-\rho(\alpha(x))\rho(y)(v)+\rho(\alpha(y))\rho(x)(v)\\
\nonumber&=&0,
\end{eqnarray}
which implies that
\begin{equation}\label{representation-3}
\rho([x,y]_C)\circ\beta=\rho(\alpha(x))\circ\rho(y)-\rho(\alpha(y))\circ\rho(x).
\end{equation}
Similarly, we have
\begin{equation}\label{representation-4}
\mu(\alpha(y))\circ\mu(x)-\mu(x\cdot y)\circ \beta=\mu(\alpha(y))\circ\rho(x)-\rho(\alpha(x))\circ\mu(y).
\end{equation}
By \eqref{representation-1}, \eqref{representation-2}, \eqref{representation-3},  \eqref{representation-4}, we obtain that $(V,\beta,\rho,\mu)$ is a representation.

For all $x,y\in A$, by the definition of a Hom-pre-Lie algebra, we have
\begin{eqnarray}\nonumber&&\phi(x\diamond y)-\phi(x)\diamond \phi(y)\\
\nonumber &=&\phi\big(x\cdot y+\theta(x,y)\big)-\alpha(x)\cdot \alpha(y)-\theta(\alpha(x),\alpha(y))-\rho(\alpha(x))\xi(y)-\mu(\alpha(y))\xi(x)\\
\nonumber &=&\xi(x\cdot y)+\beta(\theta(x,y))-\theta(\alpha(x),\alpha(y))-\rho(\alpha(x))\xi(y)-\mu(\alpha(y))\xi(x)\\
\nonumber &=&\partial_{\omega\alpha}\theta(x,y)+\partial_{\alpha\alpha}\xi(x,y)\\
\nonumber &=&0,
\end{eqnarray}
which implies that
\begin{equation}\label{2-cocycle-1}
\partial_{\omega\alpha}\theta+\partial_{\alpha\alpha}\xi=0.
\end{equation}
For all $x,y,z\in A$, by the definition of a Hom-pre-Lie algebra, we have
\begin{eqnarray}\nonumber&&(x\diamond y)\diamond\phi(z)-\phi(x)\diamond (y\diamond z)-(y\diamond x)\diamond\phi(z)+\phi(y)\diamond (x\diamond z)\\
\nonumber &=&\big(x\cdot y+\theta(x,y)\big)\diamond(\alpha(z)+\xi(z))-(\alpha(x)+\xi(x))\diamond\big(y\cdot z +\theta(y,z)\big)\\
\nonumber &&-\big(y\cdot x+\theta(y,x)\big)\diamond(\alpha(z)+\xi(z))+(\alpha(y)+\xi(y))\diamond\big(x\cdot z +\theta(x,z)\big)\\
\nonumber &=&\theta(x\cdot y,\alpha(z))+\mu(\alpha(z))\theta(x,y)-\theta(\alpha(x),y\cdot z)-\rho(\alpha(x))\theta(y,z)\\
\nonumber &&-\theta(y\cdot x,\alpha(z))-\mu(\alpha(z))\theta(y,x)+\theta(\alpha(y),x\cdot z)+\rho(\alpha(y))\theta(x,z)\\
\nonumber &&+\rho(x\cdot y)\xi(z)-\mu(y\cdot z)\xi(x)-\rho(y\cdot x)\xi(z)+\mu(x\cdot z)\xi(y)\\
\nonumber &=&-\partial_{\omega\omega}\theta(x,y,z)-\partial_{\alpha\omega}\xi(x,y,z)\\
\nonumber &=&0,
\end{eqnarray}
which implies that
\begin{equation}\label{2-cocycle-2}
\partial_{\omega\omega}\theta+\partial_{\alpha\omega}\xi=0.
\end{equation}
By \eqref{2-cocycle-1} and \eqref{2-cocycle-2}, we obtain that $\tilde{\partial}(\theta,\xi)=0$, which implies that $(\theta,\xi)$ is a $2$-cocycle of the Hom-pre-Lie algebra $(A,\cdot,\alpha)$. The proof is finished.
\end{proof}
\begin{defi}\label{defi:morphism}
Let $(\hat{A_1},\cdot_{\hat{A_1}},\alpha_{\hat{A_1}})$ and $(\hat{A_2},\cdot_{\hat{A_2}},\alpha_{\hat{A_2}})$ be two abelian extensions of a Hom-pre-Lie algebra $(A,\cdot,\alpha)$ by  $(V,\beta)$. They are said to be {\bf isomorphic} if there exists a Hom-pre-Lie algebra isomorphism $\zeta:(\hat{A_1},\cdot_{\hat{A_1}},\alpha_{\hat{A_1}})\longrightarrow (\hat{A_2},\cdot_{\hat{A_2}},\alpha_{\hat{A_2}})$ such that the following diagram is commutative:
$$\xymatrix{
  0 \ar[r] &V\ar @{=}[d]\ar[r]^{\iota_1}& \hat{A_1}\ar[d]_{\zeta}\ar[r]^{p_1}&A\ar @{=}[d]\ar[r]&0\\
     0\ar[r] &V\ar[r]^{ \iota_2} &\hat{A_2}\ar[r]^{p_2} &A\ar[r]&0.              }$$
\end{defi}

\begin{pro}
With the above notations, we have
\begin{itemize}
\item[$\rm(i)$] Two different sections of an abelian extension of a Hom-pre-Lie algebra $(A,\cdot,\alpha)$ by $(V,\beta)$ give rise to the same representation of $(A,\cdot,\alpha)$,
\item[$\rm(ii)$] Isomorphic abelian extensions give rise to the same representation of $(A,\cdot,\alpha)$.
\end{itemize}
\end{pro}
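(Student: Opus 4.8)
The plan is to prove both statements by exploiting two structural facts: the image of $V$ is an abelian ideal of the extension, so every product internal to $V$ vanishes, and any section is determined only modulo $V$. Throughout I identify $V$ with $\Img\iota\subseteq\hat{A}$ when convenient, and use that $\rho(x)(u),\mu(x)(u)\in V$; this holds because $p$ is a morphism with $p\circ\iota=0$, so $p\big(s(x)\cdot_{\hat{A}}\iota(u)\big)=x\cdot 0=0$, whence the product lies in $\Ker p=\Img\iota$.

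For part $\rm(i)$, let $s_1,s_2:A\longrightarrow\hat{A}$ be two sections, so $p\circ s_1=p\circ s_2=\Id_A$. First I would observe that $p(s_1(x)-s_2(x))=0$ for every $x\in A$, hence $s_1(x)-s_2(x)\in\Ker p=\Img\iota$, i.e.\ $s_1(x)-s_2(x)\in V$ under the identification. Writing $\rho_i,\mu_i$ for the maps built from $s_i$ via \eqref{Hom-pre-Lie-representation-1} and \eqref{Hom-pre-Lie-representation-2}, I would then compute, for $u\in V$,
\[
\rho_1(x)(u)-\rho_2(x)(u)=(s_1(x)-s_2(x))\cdot_{\hat{A}}u.
\]
Both factors lie in $\Img\iota$, and since $\iota$ is a Hom-pre-Lie algebra morphism this product equals $\iota$ applied to a product in $V$, which is $0$ because $(V,\cdot_V,\beta)$ is abelian. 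Hence $\rho_1=\rho_2$, and the mirror computation with the factors in the other order gives $\mu_1=\mu_2$, proving $\rm(i)$.

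For part $\rm(ii)$, let $\zeta:\hat{A_1}\longrightarrow\hat{A_2}$ be an extension isomorphism, so $\zeta\circ\iota_1=\iota_2$ and $p_2\circ\zeta=p_1$. Starting from any section $s_1$ of $\hat{A_1}$, I would set $s_2:=\zeta\circ s_1$ and check it is a section of $\hat{A_2}$, since $p_2\circ s_2=p_2\circ\zeta\circ s_1=p_1\circ s_1=\Id_A$. Writing $\rho_1,\mu_1$ for the maps attached to $(\hat{A_1},s_1)$ and $\rho_2,\mu_2$ for those attached to $(\hat{A_2},s_2)$, and taking the defining relation in the un-identified form $\iota_i(\rho_i(x)(u))=s_i(x)\cdot_{\hat{A_i}}\iota_i(u)$, the key computation is
\begin{align*}
\iota_2(\rho_2(x)(u))&=s_2(x)\cdot_{\hat{A_2}}\iota_2(u)=\zeta(s_1(x))\cdot_{\hat{A_2}}\zeta(\iota_1(u))\\
&=\zeta\big(s_1(x)\cdot_{\hat{A_1}}\iota_1(u)\big)=\zeta\big(\iota_1(\rho_1(x)(u))\big)=\iota_2(\rho_1(x)(u)),
\end{align*}
where I used that $\zeta$ is multiplicative and that $\zeta\circ\iota_1=\iota_2$. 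Since $\iota_2$ is injective, $\rho_2=\rho_1$, and likewise $\mu_2=\mu_1$. Finally, by part $\rm(i)$ the representation associated to $\hat{A_2}$ is independent of the chosen section, so the representation coming from $\hat{A_2}$ (with any section) coincides with the one coming from $\hat{A_1}$, which gives $\rm(ii)$.

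The computations are essentially bookkeeping; the only point demanding care is the systematic identification of $V$ with $\Img\iota$ and the repeated use of injectivity of $\iota$ (respectively $\iota_2$) to descend equalities in $\hat{A}$ back to equalities in $V$. I expect no genuine obstacle beyond this careful tracking of the inclusions together with the fact that $\rho(x)(u)$ and $\mu(x)(u)$ indeed land in $V$.
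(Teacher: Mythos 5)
Your proof is correct and follows essentially the same route as the paper: for (i) the difference of two sections lands in $\Ker p=\Img\iota\cong V$, so the relevant products vanish by abelianness of $V$; for (ii) you transport a section through the isomorphism $\zeta$ (you push forward via $\zeta$ where the paper pulls back via $\zeta^{-1}$, an immaterial difference), use multiplicativity of $\zeta$ together with $\zeta\circ\iota_1=\iota_2$, and then invoke (i). Your explicit bookkeeping of the inclusions $\iota$, $\iota_i$ and of why $\rho(x)(u),\mu(x)(u)\in V$ only makes precise what the paper leaves implicit.
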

\begin{proof}
$\rm(i)$ Let $(\hat{A},\cdot_{\hat{A}},\alpha_{\hat{A}})$ be an abelian extension of a Hom-pre-Lie algebra $(A,\cdot,\alpha)$ by  $(V,\beta)$. Choosing two different sections $s_1,s_2:A\longrightarrow \hat{A}$, by \eqref{Hom-pre-Lie-representation-1}, \eqref{Hom-pre-Lie-representation-2} and Theorem \ref{thm:cocycle}, we obtain two representations $(V,\beta,\rho_1,\mu_1)$ and $(V,\beta,\rho_2,\mu_2)$. Define $\varphi:A\longrightarrow V$ by $\varphi(x)=s_1(x)-s_2(x)$. Then for all $x\in A$, we have
\begin{eqnarray*}
 \rho_1(x)(u)-\rho_2(x)(u)&=&s_1(x)\cdot_{\hat{A}} u-s_2(x)\cdot_{\hat{A}} u\\
  &=&(\varphi(x)+s_2(x))\cdot_{\hat{A}} u-s_2(x)\cdot_{\hat{A}} u\\
  &=&\varphi(x)\cdot_{\hat{A}} u\\
  &=&0,
\end{eqnarray*}
which implies that $\rho_1=\rho_2$.
Similarly, we have $\mu_1=\mu_2$. This finishes the proof.

$\rm(ii)$ Let $(\hat{A_1},\cdot_{\hat{A_1}},\alpha_{\hat{A_1}})$ and $(\hat{A_2},\cdot_{\hat{A_2}},\alpha_{\hat{A_2}})$ are two isomorphic abelian extensions of a Hom-pre-Lie algebra $(A,\cdot,\alpha)$ by $(V,\beta)$. Let $s_1:A_1\longrightarrow \hat{A_1}$ and $s_2:A_2\longrightarrow \hat{A_2}$ be two sections of $(\hat{A_1},\cdot_{\hat{A_1}},\alpha_{\hat{A_1}})$ and $(\hat{A_2},\cdot_{\hat{A_2}},\alpha_{\hat{A_2}})$ respectively. By \eqref{Hom-pre-Lie-representation-1}, \eqref{Hom-pre-Lie-representation-2} and Theorem \ref{thm:cocycle}, we obtain that $(V,\beta,\rho_1,\mu_1)$ and $(V,\beta,\rho_2,\mu_2)$ are their representations respectively. Define $s'_1:A_1\longrightarrow \hat{A_1}$ by $s'_1=\zeta^{-1}\circ s_2$. Since $\zeta:(\hat{A_1},\cdot_{\hat{A_1}},\alpha_{\hat{A_1}})\longrightarrow (\hat{A_2},\cdot_{\hat{A_2}},\alpha_{\hat{A_2}})$ is a Hom-pre-Lie algebra isomorphism satisfying the commutative diagram in Definition \ref{defi:morphism}, by $p_2\circ \zeta=p_1$, we have
\begin{equation}
p_1\circ s'_1=p_2\circ \zeta \circ \zeta^{-1}\circ s_2=\Id_A.
\end{equation}
Thus, we obtain that $s'_1$ is a section of $(\hat{A_1},\cdot_{\hat{A_1}},\alpha_{\hat{A_1}})$. For all $x\in A, u\in V$, we have
\begin{eqnarray*}
\rho_1(x)(u)&=&s'_1(x) \cdot_{\hat{A_1}} u\\
  &=&(\zeta^{-1}\circ s_2)(x)\cdot_{\hat{A_1}} u\\
  &=&\zeta^{-1}(s_2(x) \cdot_{\hat{A_2}} u)\\
  &=&\rho_2(x)(u),
\end{eqnarray*}
which implies that $\rho_1=\rho_2$.
Similarly, we have $\mu_1=\mu_2$. This finishes the proof.
\end{proof}
So in the sequel, we fix a representation $(V,\beta,\rho,\mu)$ of a Hom-pre-Lie algebra $(A,\cdot,\alpha)$ and consider abelian extensions that induce the given representation.
\begin{thm}
Abelian extensions of a Hom-pre-Lie algebra $(A,\cdot,\alpha)$ by $(V,\beta)$ are classified by $\tilde{H}^2(A;V)$.
\end{thm}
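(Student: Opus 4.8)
The plan is to set up a bijection between the isomorphism classes of abelian extensions of $(A,\cdot,\alpha)$ by $(V,\beta)$ inducing the fixed representation $(V,\beta,\rho,\mu)$ and the group $\tilde{H}^2(A;V)$. Theorem~\ref{thm:cocycle} already attaches to any such extension, together with a section $s$, a $2$-cocycle $(\theta,\xi)$ through \eqref{product} and \eqref{morphism}. First I would show this class is independent of the chosen section. Given two sections $s_1,s_2$, the difference $\varphi\defbe s_1-s_2$ lands in $V$ since $p\circ s_1=p\circ s_2=\Id_A$, and expanding \eqref{product}, \eqref{morphism} with $s_1=s_2+\varphi$ while using the abelianness of $V$ and \eqref{Hom-pre-Lie-representation-1}, \eqref{Hom-pre-Lie-representation-2} gives
\[
\theta_1-\theta_2=\partial_{\omega\omega}\varphi,\qquad \xi_1-\xi_2=\partial_{\omega\alpha}\varphi,
\]
so that $(\theta_1,\xi_1)-(\theta_2,\xi_2)=\tilde{\partial}(\varphi)\in\tilde{B}^2(A;V)$. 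Replacing $s_1$ by $\zeta^{-1}\circ s_2$ for an isomorphism $\zeta$ as in Definition~\ref{defi:morphism} shows likewise that isomorphic extensions yield the same class, so we obtain a well-defined map from isomorphism classes to $\tilde{H}^2(A;V)$.

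For the inverse map, to a $2$-cocycle $(\theta,\xi)\in\tilde{Z}^2(A;V)$ I would associate the vector space $A\oplus V$ equipped with the product $\diamond$ and linear map $\phi$ of \eqref{eq:6.1} and \eqref{eq:6.2}, together with the canonical inclusion $\iota\colon V\to A\oplus V$ and projection $p\colon A\oplus V\to A$. The point to verify is that $(A\oplus V,\diamond,\phi)$ is a Hom-pre-Lie algebra, which amounts to reversing the computation in the proof of Theorem~\ref{thm:cocycle}: the multiplicativity $\phi((x+u)\diamond(y+v))=\phi(x+u)\diamond\phi(y+v)$ reduces, after cancelling the terms governed by \eqref{hom-lie-rep-1} and \eqref{rep-1}, to the cocycle relation $\partial_{\omega\alpha}\theta+\partial_{\alpha\alpha}\xi=0$, while the Hom-pre-Lie identity splits into the representation relations \eqref{hom-lie-rep-2} and \eqref{rep-2} on the terms valued in $V$ that come from $\rho$ and $\mu$, plus the remaining cocycle relation $\partial_{\omega\omega}\theta+\partial_{\alpha\omega}\xi=0$. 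Since $(V,\beta,\rho,\mu)$ is a representation and $(\theta,\xi)$ is closed, all these hold, so this data defines an abelian extension inducing the given representation.

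Finally I would verify that the two assignments are mutually inverse. Starting from a cocycle, the extension $(A\oplus V,\diamond,\phi)$ with the canonical section $s(x)=x+0$ returns precisely $(\theta,\xi)$ by \eqref{product}, \eqref{morphism}, so one composite is the identity on $\tilde{H}^2(A;V)$; conversely, the discussion preceding Theorem~\ref{thm:cocycle} already identifies any extension $\hat{A}$ with the model $(A\oplus V,\diamond,\phi)$ built from its cocycle, giving the identity on isomorphism classes. To see that the inverse map is well defined on cohomology classes, suppose $(\theta_1,\xi_1)-(\theta_2,\xi_2)=\tilde{\partial}(\varphi)$; then $\zeta\colon A\oplus V\to A\oplus V$, $\zeta(x+u)=x+\varphi(x)+u$, is a Hom-pre-Lie isomorphism fitting the diagram of Definition~\ref{defi:morphism}, exactly because $\theta_1-\theta_2=\partial_{\omega\omega}\varphi$ and $\xi_1-\xi_2=\partial_{\omega\alpha}\varphi$, so cohomologous cocycles give isomorphic extensions. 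I expect the main obstacle to be the verification in the second paragraph that the transported structure genuinely satisfies every Hom-pre-Lie axiom on $A\oplus V$: although it merely runs Theorem~\ref{thm:cocycle} backwards, one must carefully track how each mixed term with entries in both $A$ and $V$ collapses, via abelianness of $V$ and the four representation relations, onto the coboundary operators, confirming that closedness of $(\theta,\xi)$ together with the representation axioms is exactly what is needed.
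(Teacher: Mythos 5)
Your proposal is correct and follows essentially the same route as the paper: extract a $2$-cocycle $(\theta,\xi)$ from a section via Theorem~\ref{thm:cocycle}, show the class is independent of the section and of the isomorphism class through $\varphi=s_1-s_2$ giving $\tilde{\partial}(\varphi)$, and conversely realize any cocycle by \eqref{eq:6.1}--\eqref{eq:6.2} with cohomologous cocycles linked by $\zeta(x+u)=x+u+\varphi(x)$. The only difference is that you make explicit two points the paper leaves implicit — that the transported structure $(A\oplus V,\diamond,\phi)$ built from a cocycle genuinely satisfies the Hom-pre-Lie axioms (by running the proof of Theorem~\ref{thm:cocycle} backwards), and that the two assignments are mutually inverse — which is a matter of thoroughness, not a different argument.
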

\begin{proof}
Let $(\hat{A},\cdot_{\hat{A}},\alpha_{\hat{A}})$ be an abelian extension of a Hom-pre-Lie algebra $(A,\cdot,\alpha)$ by  $(V,\beta)$. Choosing a section $s:A\longrightarrow \hat{A}$, by Theorem \ref{thm:cocycle}, we obtain that $(\theta,\xi)\in \tilde{Z}^2(A;V)$. Now we show that the cohomological class of $(\theta,\xi)$ does not depend on the choice of sections. In fact, let $s_1$ and $s_2$ be two different sections. Define $\varphi:A\longrightarrow V$ by $\varphi(x)=s_1(x)-s_2(x)$. Then for all $x,y\in A$, we have
\begin{eqnarray*}
 \theta_1(x,y)&=&s_1(x)\cdot_{\hat{A}} s_1(y)-s_1(x\cdot y)\\
  &=&\big(s_2(x)+\varphi(x)\big)\cdot_{\hat{A}} \big(s_2(y)+\varphi(y)\big)-s_2(x\cdot y)-\varphi(x\cdot y)\\
  &=&s_2(x)\cdot_{\hat{A}} s_2(y)+\rho(x)\varphi(y)+\mu(y)\varphi(x)-s_2(x\cdot y)-\varphi(x\cdot y)\\
  &=&\theta_2(x,y)+\partial_{\omega\omega}\varphi(x,y),
\end{eqnarray*}
which implies that $\theta_1-\theta_2=\partial_{\omega\omega}\varphi$.

For all $x\in A$, we have
\begin{eqnarray*}
 \xi_1(x)&=&\alpha_{\hat{A}}(s_1(x))-s_1(\alpha(x))\\
  &=&\alpha_{\hat{A}}(\varphi(x)+s_2(x))-\varphi(\alpha(x))-s_2(\alpha(x))\\
  &=&\alpha_{\hat{A}}(\varphi(x))+\alpha_{\hat{A}}(s_2(x))-\varphi(\alpha(x))-s_2(\alpha(x))\\
  &=&\xi_2(x)+\beta(\varphi(x))-\varphi(\alpha(x)),
\end{eqnarray*}
which implies that $\xi_1-\xi_2=\partial_{\omega\alpha}\varphi$.

Therefore, we obtain that $(\theta_1-\theta_2,\xi_1-\xi_2)\in \tilde{B}^2(A;V)$, $(\theta_1,\xi_1)$ and $(\theta_2,\xi_2)$ are in the same cohomological class.

Now we  prove that isomorphic abelian extensions give rise to the same element in  $\tilde{H}^2(A;V)$. Assume that $(\hat{A_1},\cdot_{\hat{A_1}},\alpha_{\hat{A_1}})$ and $(\hat{A_2},\cdot_{\hat{A_2}},\alpha_{\hat{A_2}})$ are two isomorphic abelian extensions of a Hom-pre-Lie algebra $(A,\cdot,\alpha)$ by $(V,\beta)$, and $\zeta:(\hat{A_1},\cdot_{\hat{A_1}},\alpha_{\hat{A_1}})\longrightarrow (\hat{A_2},\cdot_{\hat{A_2}},\alpha_{\hat{A_2}})$ is a Hom-pre-Lie algebra isomorphism satisfying the commutative diagram in Definition \ref{defi:morphism}. Assume that $s_1:A\longrightarrow \hat{A_1}$ is a section of $\hat{A_1}$. By $p_2\circ \zeta=p_1$, we have
\begin{equation}
p_2\circ (\zeta\circ s_1)=p_1\circ s_1=\Id_A.
\end{equation}
Thus, we obtain that $\zeta\circ s_1$ is a section of $\hat{A_2}$. Define $s_2=\zeta\circ s_1$. Since $\zeta$ is an isomorphism of Hom-pre-Lie algebras and $\zeta\mid_V=\Id_V$, for all $x,y\in A$, we have
\begin{eqnarray*}
 \theta_2(x,y)&=&s_2(x)\cdot_{\hat{A_2}} s_2(y)-s_2(x\cdot y)\\
  &=&(\zeta\circ s_1)(x)\cdot_{\hat{A_2}}(\zeta\circ s_1)(y)-(\zeta\circ s_1)(x\cdot y)\\
  &=&\zeta\big(s_1(x)\cdot_{\hat{A_1}} s_1(y)-s_1(x\cdot y)\big)\\
  &=&\theta_1(x,y),
\end{eqnarray*}
and
\begin{eqnarray*}
 \xi_2(x)&=&\alpha_{\hat{A_2}}(s_2(x))-s_2(\alpha(x))\\
  &=&\alpha_{\hat{A_2}}(\zeta(s_1(x)))-\zeta(s_1(\alpha(x)))\\
  &=&\zeta(\alpha_{\hat{A_1}}(s_1(x))-s_1(\alpha(x)))\\
  &=&\xi_1(x).
\end{eqnarray*}
Thus, isomorphic abelian extensions gives rise to the same element in $\tilde{H}^2(A;V)$.

Conversely, given two 2-cocycles $(\theta_1,\xi_1)$ and $(\theta_2,\xi_2)$, by \eqref{eq:6.1} and \eqref{eq:6.2}, we can construct two abelian extensions $(A\oplus V,\diamond_1,\phi_1)$ and $(A\oplus V,\diamond_2,\phi_2)$. If  $(\theta_1,\xi_1), (\theta_2,\xi_2)\in \tilde{H}^2(A;V)$, then there exists $\varphi:A\longrightarrow V$, such that $\theta_1=\theta_2+\partial_{\omega\omega}\varphi$ and $\xi_1=\xi_2+\partial_{\omega\alpha}\varphi$. We define $\zeta:A\oplus V\longrightarrow A\oplus V$ by
\begin{equation}
\zeta(x+u)=x+u+\varphi(x),\quad \forall ~x\in A, u\in V.
\end{equation}
For all $x,y\in A, u,v\in V$, by $\theta_1=\theta_2+\partial_{\omega\omega}\varphi$, we have
\begin{eqnarray}\nonumber&&\zeta\big((x+u)\diamond_1(y+v)\big)- \zeta(x+u)\diamond_2\zeta(y+v)\\
\nonumber &=&\zeta\big(x\cdot y+\theta_1(x,y)+\rho(x)(v)+\mu(y)(u)\big)-\big(x+u+\varphi(x)\big)\diamond_2 \big(y+v+\varphi(y)\big)\\
\nonumber &=&\theta_1(x,y)+\varphi(x\cdot y)-\theta_2(x,y)-\rho(x)\varphi(y)-\mu(y)\varphi(x)\\
\nonumber &=&\theta_1(x,y)-\theta_2(x,y)-\partial_{\omega\omega}\varphi(x,y)\\
 &=&0,
\end{eqnarray}
and for all $x\in A, u\in V$, by $\xi_1=\xi_2+\partial_{\omega\alpha}\varphi$, we have
\begin{eqnarray}\nonumber&&\zeta \circ\phi_1(x+u)-\phi_1\circ \zeta(x+u)\\
\nonumber &=&\zeta\big(\alpha(x)+\xi_1(x)+\beta(u)\big)-\phi_2\big(x+u+\varphi(x)\big)\\
\nonumber &=&\xi_1(x)+\varphi(\alpha(x))-\xi_2(x)-\beta(\varphi(x))\\
\nonumber &=&\xi_1(x)-\xi_2(x)-\partial_{\omega\alpha}\varphi(x)\\
 &=&0,
\end{eqnarray}
which implies that $\zeta$ is a Hom-pre-Lie algebra isomorphism from $(A\oplus V,\diamond_1,\phi_1)$ to $(A\oplus V,\diamond_2,\phi_2)$. Moreover, it is obvious that the diagram in Definition \ref{defi:morphism} is commutative. This finishes the proof.
\end{proof}

\section*{Appendix: The proof of Lemma \ref{lem:cohomology-computation}}
By straightforward computations, for all $x_1,\dots,x_{n+2}\in A$, we have
{\footnotesize
\begin{eqnarray}
\nonumber&&\partial_{\omega\omega}(\partial_{\omega\omega}\varphi)(x_1,\dots,x_{n+2})\\
\nonumber&=&\sum_{i=1}^{n+1}(-1)^{i+1}\rho(\alpha^n(x_i))(\partial_{\omega\omega}\varphi)(x_1,\dots,\widehat{x_i},\dots,x_{n+2})\\
\nonumber&&+\sum_{i=1}^{n+1}(-1)^{i+1}\mu(\alpha^n(x_{n+2}))(\partial_{\omega\omega}\varphi)(x_1,\dots,\widehat{x_i},\dots,x_{n+1},x_i)\\
\nonumber&&-\sum_{i=1}^{n+1}(-1)^{i+1} (\partial_{\omega\omega}\varphi)(\alpha(x_1),\dots,\widehat{\alpha(x_i)},\dots,\alpha(x_{n+1}),x_i\cdot x_{n+2})\\
\nonumber&&+\sum_{1\leq i<j\leq n+1}(-1)^{i+j} (\partial_{\omega\omega}\varphi)([x_i,x_j]_C,\alpha(x_1),\dots,\widehat{\alpha(x_i)},\dots,\widehat{\alpha(x_j)},\dots,\alpha(x_{n+2}))\\
\label{eq-9}&=&\sum_{1\leq j<i\leq n+1}(-1)^{i+1}(-1)^{j+1}\rho(\alpha^n(x_i))\rho(\alpha^{n-1}(x_j))\varphi(x_1,\dots,\widehat{x_j},\dots,\widehat{x_i},\dots,x_{n+2})\\
\label{eq-10}&&+\sum_{1\leq i<j\leq n+1}(-1)^{i+1}(-1)^j\rho(\alpha^n(x_i))\rho(\alpha^{n-1}(x_j))\varphi(x_1,\dots,\widehat{x_i},\dots,\widehat{x_j},\dots,x_{n+2})\\
\label{eq-11}&&+\sum_{1\leq j<i\leq n+1}(-1)^{i+1}(-1)^{j+1}\rho(\alpha^n(x_i))\mu(\alpha^{n-1}(x_{n+2}))\varphi(x_1,\dots,\widehat{x_j},\dots,\widehat{x_i},\dots,x_{n+1},x_j)\\
\label{eq-12}&&+\sum_{1\leq i<j\leq n+1}(-1)^{i+1}(-1)^j\rho(\alpha^n(x_i))\mu(\alpha^{n-1}(x_{n+2}))\varphi(x_1,\dots,\widehat{x_i},\dots,\widehat{x_j},\dots,x_{n+1},x_j)\\
\label{eq-13}&&-\sum_{1\leq j<i\leq n+1}(-1)^{i+1}(-1)^{j+1}\rho(\alpha^n(x_i))\varphi(\alpha(x_1),\dots,\widehat{\alpha(x_j)},\dots,\widehat{\alpha(x_i)},\dots,\alpha(x_{n+1}),x_j\cdot x_{n+2})\\
\label{eq-14}&&-\sum_{1\leq i<j\leq n+1}(-1)^{i+1}(-1)^j\rho(\alpha^n(x_i))\varphi(\alpha(x_1),\dots,\widehat{\alpha(x_i)},\dots,\widehat{\alpha(x_j)},\dots,\alpha(x_{n+1}),x_j\cdot x_{n+2})\\
\label{eq-15}&&+\sum_{1\leq j<k<i\leq n+1}(-1)^{i+1}(-1)^{j+k}\rho(\alpha^n(x_i))\varphi([x_j,x_k]_C,\alpha(x_1),\dots,\widehat{\alpha(x_j)},\dots,\widehat{\alpha(x_k)},\dots,\widehat{\alpha(x_i)},\dots,\alpha(x_{n+2}))\\
\label{eq-16}&&+\sum_{1\leq j<i<k\leq n+1}(-1)^{i+1}(-1)^{j+k-1}\rho(\alpha^n(x_i))\varphi([x_j,x_k]_C,\alpha(x_1),\dots,\widehat{\alpha(x_j)},\dots,\widehat{\alpha(x_i)},\dots,
\widehat{\alpha(x_k)},\dots,\alpha(x_{n+2}))\\
\label{eq-17}&&+\sum_{1\leq i<j<k\leq n+1}(-1)^{i+1}(-1)^{j+k}\rho(\alpha^n(x_i))\varphi([x_j,x_k]_C,\alpha(x_1),\dots,\widehat{\alpha(x_i)},\dots,\widehat{\alpha(x_j)},\dots,
\widehat{\alpha(x_k)},\dots,\alpha(x_{n+2}))\\
\label{eq-18}&&+\sum_{1\leq j<i\leq n+1}(-1)^{i+1}(-1)^{j+1}\mu(\alpha^n(x_{n+2}))\rho(\alpha^{n-1}(x_j))\varphi(x_1,\dots,\widehat{x_j},\dots,\widehat{x_i},\dots,x_{n+1},x_i)\\
\label{eq-19}&&+\sum_{1\leq i<j\leq n+1}(-1)^{i+1}(-1)^j\mu(\alpha^n(x_{n+2}))\rho(\alpha^{n-1}(x_j))\varphi(x_1,\dots,\widehat{x_i},\dots,\widehat{x_j},\dots,x_{n+1},x_i)\\
\label{eq-20}&&+\sum_{1\leq j<i\leq n+1}(-1)^{i+1}(-1)^{j+1}\mu(\alpha^n(x_{n+2}))\mu(\alpha^{n-1}(x_i))\varphi(x_1,\dots,\widehat{x_j},\dots,\widehat{x_i},\dots,x_{n+1},x_j)\\
\label{eq-21}&&+\sum_{1\leq i<j\leq n+1}(-1)^{i+1}(-1)^j\mu(\alpha^n(x_{n+2}))\mu(\alpha^{n-1}(x_i))\varphi(x_1,\dots,\widehat{x_i},\dots,\widehat{x_j},\dots,x_{n+1},x_j)\\
\label{eq-22}&&-\sum_{1\leq j<i\leq n+1}(-1)^{i+1}(-1)^{j+1}\mu(\alpha^n(x_{n+2}))\varphi(\alpha(x_1),\dots,\widehat{\alpha(x_j)},\dots,\widehat{\alpha(x_i)},\dots,\alpha(x_{n+1}),x_j\cdot x_i)\\
\label{eq-23}&&-\sum_{1\leq i<j\leq n+1}(-1)^{i+1}(-1)^j\mu(\alpha^n(x_{n+2}))\varphi(\alpha(x_1),\dots,\widehat{\alpha(x_i)},\dots,\widehat{\alpha(x_j)},\dots,\alpha(x_{n+1}),x_j\cdot x_i)\\
\label{eq-24}&&+\sum_{1\leq j<k<i\leq n+1}(-1)^{i+1}(-1)^{j+k}\mu(\alpha^n(x_{n+2}))\varphi([x_j,x_k]_C,\alpha(x_1),\dots,\widehat{\alpha(x_j)},\dots,\widehat{\alpha(x_k)},\\
\nonumber&&\dots,\widehat{\alpha(x_i)},\dots,\alpha(x_{n+1}),\alpha(x_i))\\
\label{eq-25}&&+\sum_{1\leq j<i<k\leq n+1}(-1)^{i+1}(-1)^{j+k-1}\mu(\alpha^n(x_{n+2}))\varphi([x_j,x_k]_C,\alpha(x_1),\dots,\widehat{\alpha(x_j)},\dots,\widehat{\alpha(x_i)},\\
\nonumber&&\dots,\widehat{\alpha(x_k)},\dots,\alpha(x_{n+1}),\alpha(x_i))\\
\label{eq-26}&&+\sum_{1\leq i<j<k\leq n+1}(-1)^{i+1}(-1)^{j+k}\mu(\alpha^n(x_{n+2}))\varphi([x_j,x_k]_C,\alpha(x_1),\dots,\widehat{\alpha(x_i)},\dots,\widehat{\alpha(x_j)},\\
\nonumber&&\dots,\widehat{\alpha(x_k)},\dots,\alpha(x_{n+1}),\alpha(x_i))\\
\label{eq-27}&&-\sum_{1\leq j<i\leq n+1}(-1)^{i+1}(-1)^{j+1}\rho(\alpha^n(x_j))\varphi(\alpha(x_1),\dots,\widehat{\alpha(x_j)},\dots,\widehat{\alpha(x_i)},\dots,\alpha(x_{n+1}),x_i\cdot x_{n+2})\\
\label{eq-28}&&-\sum_{1\leq i<j\leq n+1}(-1)^{i+1}(-1)^j\rho(\alpha^n(x_j))\varphi(\alpha(x_1),\dots,\widehat{\alpha(x_i)},\dots,\widehat{\alpha(x_j)},\dots,\alpha(x_{n+1}),x_i\cdot x_{n+2})\\
\label{eq-29}&&-\sum_{1\leq j<i\leq n+1}(-1)^{i+1}(-1)^{j+1}\mu(\alpha^{n-1}(x_i \cdot x_{n+2}))\varphi(\alpha(x_1),\dots,\widehat{\alpha(x_j)},\dots,\widehat{\alpha(x_i)},\dots,\alpha(x_{n+1}),
\alpha(x_j))\\
\label{eq-30}&&-\sum_{1\leq i<j\leq n+1}(-1)^{i+1}(-1)^j\mu(\alpha^{n-1}(x_i \cdot x_{n+2}))\varphi(\alpha(x_1),\dots,\widehat{\alpha(x_i)},\dots,\widehat{\alpha(x_j)},\dots,\alpha(x_{n+1}),
\alpha(x_j))\\
\label{eq-31}&&+\sum_{1\leq j<i\leq n+1}(-1)^{i+1}(-1)^{j+1}\varphi(\alpha^2(x_1),\dots,\widehat{\alpha^2(x_j)},\dots,\widehat{\alpha^2(x_i)},\dots,\alpha^2(x_{n+1}),
\alpha(x_j)\cdot(x_i\cdot x_{n+2}))\\
\label{eq-32}&&+\sum_{1\leq i<j\leq n+1}(-1)^{i+1}(-1)^j\varphi(\alpha^2(x_1),\dots,\widehat{\alpha^2(x_i)},\dots,\widehat{\alpha^2(x_j)},\dots,\alpha^2(x_{n+1}),
\alpha(x_j)\cdot(x_i\cdot x_{n+2}))\\
\label{eq-33}&&-\sum_{1\leq j<k<i\leq n+1}(-1)^{i+1}(-1)^{j+k}\varphi([\alpha(x_j),\alpha(x_k)]_C,\alpha^2(x_1),\dots,\widehat{\alpha^2(x_j)},\dots,\widehat{\alpha^2(x_k)},\dots,\\
\nonumber&&\widehat{\alpha^2(x_i)},\dots,\alpha^2(x_{n+1}),\alpha(x_i \cdot x_{n+2}))\\
\label{eq-34}&&-\sum_{1\leq j<i<k\leq n+1}(-1)^{i+1}(-1)^{j+k-1}\varphi([\alpha(x_j),\alpha(x_k)]_C,\alpha^2(x_1),\dots,\widehat{\alpha^2(x_j)},\dots,\widehat{\alpha^2(x_i)},\dots,\\
\nonumber&&\widehat{\alpha^2(x_k)},\dots,\alpha^2(x_{n+1}),\alpha(x_i \cdot x_{n+2}))\\
\label{eq-35}&&-\sum_{1\leq i<j<k\leq n+1}(-1)^{i+1}(-1)^{j+k}\varphi([\alpha(x_j),\alpha(x_k)]_C,\alpha^2(x_1),\dots,\widehat{\alpha^2(x_i)},\dots,\widehat{\alpha^2(x_j)},\dots,\\
\nonumber&&\widehat{\alpha^2(x_k)},\dots,\alpha^2(x_{n+1}),\alpha(x_i \cdot x_{n+2}))\\
\label{eq-36}&&+\sum_{1\leq i<j\leq n+1}(-1)^{i+j}\rho(\alpha^{n-1}[x_i,x_j]_C)\varphi(\alpha(x_1),\dots,\widehat{\alpha(x_i)},\dots,\widehat{\alpha(x_j)},\dots,\alpha(x_{n+2}))\\
\label{eq-37}&&+\sum_{1\leq k<i<j\leq n+1}(-1)^{i+j}(-1)^k\rho(\alpha^n(x_k)))\varphi([x_i,x_j]_C,\alpha(x_1),\dots,\widehat{\alpha(x_k)},\dots,\widehat{\alpha(x_i)},\dots,
\widehat{\alpha(x_j)},\dots,\alpha(x_{n+2}))\\
\label{eq-38}&&+\sum_{1\leq i<k<j\leq n+1}(-1)^{i+j}(-1)^{k+1}\rho(\alpha^n(x_k)))\varphi([x_i,x_j]_C,\alpha(x_1),\dots,\widehat{\alpha(x_i)},\dots,\widehat{\alpha(x_k)},\dots,
\widehat{\alpha(x_j)},\dots,\alpha(x_{n+2}))\\
\label{eq-39}&&+\sum_{1\leq i<j<k\leq n+1}(-1)^{i+j}(-1)^k\rho(\alpha^n(x_k)))\varphi([x_i,x_j]_C,\alpha(x_1),\dots,\widehat{\alpha(x_i)},\dots,\widehat{\alpha(x_j)},\dots,
\widehat{\alpha(x_k)},\dots,\alpha(x_{n+2}))\\
\label{eq-40}&&+\sum_{1\leq k<i<j\leq n+1}(-1)^{i+j}(-1)^k\mu(\alpha^n(x_{n+2})))\varphi([x_i,x_j]_C,\alpha(x_1),\dots,\widehat{\alpha(x_k)},\dots,\widehat{\alpha(x_i)},\dots,\\
\nonumber&&\widehat{\alpha(x_j)},\dots,\alpha(x_{n+1}),\alpha(x_k))\\
\label{eq-41}&&+\sum_{1\leq i<k<j\leq n+1}(-1)^{i+j}(-1)^{k+1}\mu(\alpha^n(x_{n+2})))\varphi([x_i,x_j]_C,\alpha(x_1),\dots,\widehat{\alpha(x_i)},\dots,\widehat{\alpha(x_k)},\dots,\\
\nonumber&&\widehat{\alpha(x_j)},\dots,\alpha(x_{n+1}),\alpha(x_k))\\
\label{eq-42}&&+\sum_{1\leq i<j<k\leq n+1}(-1)^{i+j}(-1)^k\mu(\alpha^n(x_{n+2})))\varphi([x_i,x_j]_C,\alpha(x_1),\dots,\widehat{\alpha(x_i)},\dots,\widehat{\alpha(x_j)},\dots,\\
\nonumber&&\widehat{\alpha(x_k)},\dots,\alpha(x_{n+1}),\alpha(x_k))\\
\label{eq-43}&&+\sum_{1\leq i<j\leq n+1}(-1)^{i+j}\mu(\alpha^n(x_{n+2}))\varphi(\alpha(x_1),\dots,\widehat{\alpha(x_i)},\dots,\widehat{\alpha(x_j)},\dots,\alpha(x_{n+1}),[x_i,x_j]_C)\\
\label{eq-44}&&-\sum_{1\leq k<i<j\leq n+1}(-1)^{i+j}(-1)^k\varphi(\alpha[x_i,x_j]_C,\alpha^2(x_1),\dots,\widehat{\alpha^2(x_k)},\dots,\widehat{\alpha^2(x_i)},\dots,\widehat{\alpha^2(x_j)},\\
\nonumber&&\dots,\alpha^2(x_{n+1}),\alpha(x_k) \cdot \alpha(x_{n+2}))\\
\label{eq-45}&&-\sum_{1\leq i<k<j\leq n+1}(-1)^{i+j}(-1)^{k+1}\varphi(\alpha[x_i,x_j]_C,\alpha^2(x_1),\dots,\widehat{\alpha^2(x_i)},\dots,\widehat{\alpha^2(x_k)},\dots,\widehat{\alpha^2(x_j)},\\
\nonumber&&\dots,\alpha^2(x_{n+1}),\alpha(x_k) \cdot \alpha(x_{n+2}))\\
\label{eq-46}&&-\sum_{1\leq i<j<k\leq n+1}(-1)^{i+j}(-1)^k\varphi(\alpha[x_i,x_j]_C,\alpha^2(x_1),\dots,\widehat{\alpha^2(x_i)},\dots,\widehat{\alpha^2(x_j)},\dots,\widehat{\alpha^2(x_k)},\\
\nonumber&&\dots,\alpha^2(x_{n+1}),\alpha(x_k) \cdot \alpha(x_{n+2}))\\
\label{eq-47}&&-\sum_{1\leq i<j\leq n+1}(-1)^{i+j}\varphi(\alpha^2(x_1),\dots,\widehat{\alpha^2(x_i)},\dots,\widehat{\alpha^2(x_j)},\dots,\alpha^2(x_{n+1}),[x_i,x_j]_C \cdot \alpha(x_{n+2}))\\
\label{eq-48}&&+\sum_{1\leq k<l<i<j\leq n+1}(-1)^{i+j}(-1)^{k+l}\varphi([\alpha(x_k),\alpha(x_l)]_C,\alpha[x_i,x_j]_C,\alpha^2(x_1),\dots,\widehat{\alpha^2(x_k)},\dots,\\
\nonumber&&\widehat{\alpha^2(x_l)},\dots,\widehat{\alpha^2(x_i)},\dots,\widehat{\alpha^2(x_j)},\dots,\alpha^2(x_{n+2}))\\
\label{eq-49}&&+\sum_{1\leq k<i<l<j\leq n+1}(-1)^{i+j}(-1)^{k+l+1}\varphi([\alpha(x_k),\alpha(x_l)]_C,\alpha[x_i,x_j]_C,\alpha^2(x_1),\dots,\widehat{\alpha^2(x_k)},\dots,\\
\nonumber&&\widehat{\alpha^2(x_i)},\dots,\widehat{\alpha^2(x_l)},\dots,\widehat{\alpha^2(x_j)},\dots,\alpha^2(x_{n+2}))\\
\label{eq-50}&&+\sum_{1\leq i<k<l<j\leq n+1}(-1)^{i+j}(-1)^{k+l}\varphi([\alpha(x_k),\alpha(x_l)]_C,\alpha[x_i,x_j]_C,\alpha^2(x_1),\dots,\widehat{\alpha^2(x_i)},\dots,\\
\nonumber&&\widehat{\alpha^2(x_k)},\dots,\widehat{\alpha^2(x_l)},\dots,\widehat{\alpha^2(x_j)},\dots,\alpha^2(x_{n+2}))\\
\label{eq-51}&&+\sum_{1\leq i<j<k<l\leq n+1}(-1)^{i+j}(-1)^{k+l}\varphi([\alpha(x_k),\alpha(x_l)]_C,\alpha[x_i,x_j]_C,\alpha^2(x_1),\dots,\widehat{\alpha^2(x_i)},\dots,\\
\nonumber&&\widehat{\alpha^2(x_j)},\dots,\widehat{\alpha^2(x_k)},\dots,\widehat{\alpha^2(x_l)},\dots,\alpha^2(x_{n+2}))\\
\label{eq-52}&&+\sum_{1\leq i<k<j<l\leq n+1}(-1)^{i+j}(-1)^{k+l+1}\varphi([\alpha(x_k),\alpha(x_l)]_C,\alpha[x_i,x_j]_C,\alpha^2(x_1),\dots,\widehat{\alpha^2(x_i)},\dots,\\
\nonumber&&\widehat{\alpha^2(x_k)},\dots,\widehat{\alpha^2(x_j)},\dots,\widehat{\alpha^2(x_l)},\dots,\alpha^2(x_{n+2}))\\
\label{eq-53}&&+\sum_{1\leq k<i<j<l\leq n+1}(-1)^{i+j}(-1)^{k+l}\varphi([\alpha(x_k),\alpha(x_l)]_C,\alpha[x_i,x_j]_C,\alpha^2(x_1),\dots,\widehat{\alpha^2(x_k)},\dots,\\
\nonumber&&\widehat{\alpha^2(x_i)},\dots,\widehat{\alpha^2(x_j)},\dots,\widehat{\alpha^2(x_l)},\dots,\alpha^2(x_{n+2}))\\
\label{eq-54}&&+\sum_{1\leq k<i<j\leq n+1}(-1)^{i+j}(-1)^k\varphi([[x_i,x_j]_C,\alpha(x_k)]_C,\alpha^2(x_1),\dots,\widehat{\alpha^2(x_k)},\dots,\widehat{\alpha^2(x_i)},\dots,
\widehat{\alpha^2(x_j)},\dots,\alpha^2(x_{n+2}))\\
\label{eq-55}&&+\sum_{1\leq i<k<j\leq n+1}(-1)^{i+j}(-1)^{k+1}\varphi([[x_i,x_j]_C,\alpha(x_k)]_C,\alpha^2(x_1),\dots,\widehat{\alpha^2(x_i)},\dots,\widehat{\alpha^2(x_k)},\dots,
\widehat{\alpha^2(x_j)},\dots,\alpha^2(x_{n+2}))\\
\label{eq-56}&&+\sum_{1\leq i<j<k\leq n+1}(-1)^{i+j}(-1)^k\varphi([[x_i,x_j]_C,\alpha(x_k)]_C,\alpha^2(x_1),\dots,\widehat{\alpha^2(x_i)},\dots,\widehat{\alpha^2(x_j)},\dots,
\widehat{\alpha^2(x_k)},\dots,\alpha^2(x_{n+2})).
\end{eqnarray}
}
The terms \eqref{eq-13} and \eqref{eq-28}, \eqref{eq-14} and \eqref{eq-27}, \eqref{eq-15} and \eqref{eq-39}, \eqref{eq-16} and \eqref{eq-38}, \eqref{eq-17} and \eqref{eq-37}, \eqref{eq-24} and \eqref{eq-42}, \eqref{eq-25} and \eqref{eq-41}, \eqref{eq-26} and \eqref{eq-40} cancel each other. By the definition of the sub-adjacent Hom-Lie algebra, the sum of \eqref{eq-54}, \eqref{eq-55} and \eqref{eq-56} is zero. By the antisymmetry condition, the term \eqref{eq-48} and \eqref{eq-51}, \eqref{eq-49} and \eqref{eq-52}, \eqref{eq-50} and \eqref{eq-53} cancel each other. By the definition of the sub-adjacent Lie bracket, the sum of \eqref{eq-22}, \eqref{eq-23} and \eqref{eq-43} is zero. By the definition of Hom-pre-Lie algebras, the sum of \eqref{eq-31}, \eqref{eq-32} and \eqref{eq-47} is zero. Since $\alpha$ is an algebra morphism, the term \eqref{eq-33} and \eqref{eq-46}, \eqref{eq-34} and \eqref{eq-45}, \eqref{eq-35} and \eqref{eq-44} cancel each other.

Since $(V,\beta,\rho,\mu)$ is a representation of the Hom-pre-Lie algebra $(A,\cdot,\alpha)$, the sum of \eqref{eq-9} and \eqref{eq-10} can be written as
{\footnotesize
\begin{equation}
\sum_{1\leq i<j\leq n+1}(-1)^{i+j+1}\rho([\alpha^{n-1}(x_i),\alpha^{n-1}(x_j)]_C)\beta\varphi(x_1,\dots,\widehat{x_i},\dots,\widehat{x_j},\dots, x_{n+2}),
\end{equation}
}
the sum of \eqref{eq-11}, \eqref{eq-19} and \eqref{eq-20} can be written as
{\footnotesize
\begin{equation}
-\sum_{1\leq j<i\leq n+1}(-1)^{i+j+1}\mu(\alpha^{n-1}(x_i) \cdot\alpha^{n-1}(x_{n+2}))\beta\varphi(x_1,\dots,\widehat{x_j},\dots,\widehat{x_i},\dots,x_{n+1}, x_j),
\end{equation}
}
and the sum of \eqref{eq-12}, \eqref{eq-18} and \eqref{eq-21} can be written as
{\footnotesize
\begin{equation}
\sum_{1\leq i<j\leq n+1}(-1)^{i+j+1}\mu(\alpha^{n-1}(x_i) \cdot\alpha^{n-1}(x_{n+2}))\beta\varphi(x_1,\dots,\widehat{x_i},\dots,\widehat{x_j},\dots,x_{n+1}, x_j).
\end{equation}
}
Thus, we have
{\footnotesize
\begin{eqnarray*}
\nonumber&&\partial_{\omega\omega}(\partial_{\omega\omega}\varphi)(x_1,\dots,x_{n+2})\\
&=&\sum_{1\leq i<j\leq n+1}(-1)^{i+j+1}\rho([\alpha^{n-1}(x_i),\alpha^{n-1}(x_j)]_C)\beta\varphi(x_1,\dots,\widehat{x_i},\dots,\widehat{x_j}, \dots,x_{n+2}),\\
&&+\sum_{1\leq i<j\leq n+1}(-1)^{i+j+1}\mu(\alpha^{n-1}(x_i) \cdot\alpha^{n-1}(x_{n+2}))\beta\varphi(x_1,\dots,\widehat{x_i},\dots,\widehat{x_j},\dots,x_{n+1}, x_j)\\
&&-\sum_{1\leq j<i\leq n+1}(-1)^{i+j+1}\mu(\alpha^{n-1}(x_i) \cdot\alpha^{n-1}(x_{n+2}))\beta\varphi(x_1,\dots,\widehat{x_j},\dots,\widehat{x_i},\dots,x_{n+1}, x_j)\\
&&-\sum_{1\leq i<j\leq n+1}(-1)^{i+j+1}\rho(\alpha^{n-1}[x_i,x_j]_C)\varphi(\alpha(x_1),\dots,\widehat{\alpha(x_i)},\dots,\widehat{\alpha(x_j)},\dots,\alpha(x_{n+2}))\\
&&-\sum_{1\leq i<j\leq n+1}(-1)^{i+j+1}\mu(\alpha^{n-1}(x_i \cdot x_{n+2}))\varphi(\alpha(x_1),\dots,\widehat{\alpha(x_i)},\dots,\widehat{\alpha(x_j)},\dots,\alpha(x_{n+1}),\alpha(x_j))\\
&&+\sum_{1\leq j<i\leq n+1}(-1)^{i+j+1}\mu(\alpha^{n-1}(x_i \cdot x_{n+2}))\varphi(\alpha(x_1),\dots,\widehat{\alpha(x_j)},\dots,\widehat{\alpha(x_i)},\dots,\alpha(x_{n+1}),\alpha(x_j)).
\end{eqnarray*}
}
For all $x_1,\dots,x_{n+2}\in A$, we have
{\footnotesize
\begin{eqnarray*}
&&\partial_{\alpha\omega}(\partial_{\omega\alpha}\varphi)(x_1,\dots,x_{n+2})\\
&=&\sum_{1\leq i<j\leq n+1}(-1)^{i+j}\rho([\alpha^{n-1}(x_i),\alpha^{n-1}(x_j)]_C) (\partial_{\omega\alpha}\varphi)(x_1,\dots,\hat{x_i},\dots,\hat{x_j},\dots,x_{n+2})\\
&&+\sum_{1\leq i<j\leq n+1}(-1)^{i+j}\mu(\alpha^{n-1}(x_i)\cdot\alpha^{n-1}(x_{n+2}))(\partial_{\omega\alpha}\varphi)(x_1,\dots,\hat{x_i},\dots,\hat{x_j},\dots,x_{n+1},x_j)\\
&&-\sum_{1\leq i<j\leq n+1}(-1)^{i+j}\mu(\alpha^{n-1}(x_j)\cdot\alpha^{n-1}(x_{n+2}))(\partial_{\omega\alpha}\varphi)(x_1,\dots,\hat{x_i},\dots,\hat{x_j},\dots,x_{n+1},x_i)\\
&=&\sum_{1\leq i<j\leq n+1}(-1)^{i+j}\rho([\alpha^{n-1}(x_i),\alpha^{n-1}(x_j)]_C)\beta\varphi(x_1,\dots,\widehat{x_i},\dots,\widehat{x_j}, \dots,x_{n+2}),\\
&&-\sum_{1\leq i<j\leq n+1}(-1)^{i+j}\rho([\alpha^{n-1}(x_i),\alpha^{n-1}(x_j)]_C)\varphi(\alpha(x_1),\dots,\widehat{\alpha(x_i)},\dots,\widehat{\alpha(x_j)},\dots,\alpha(x_{n+2}))\\
&&+\sum_{1\leq i<j\leq n+1}(-1)^{i+j}\mu(\alpha^{n-1}(x_i) \cdot\alpha^{n-1}(x_{n+2}))\beta\varphi(x_1,\dots,\widehat{x_i},\dots,\widehat{x_j},\dots,x_{n+1}, x_j)\\
&&-\sum_{1\leq i<j\leq n+1}(-1)^{i+j}\mu(\alpha^{n-1}(x_i) \cdot\alpha^{n-1}(x_{n+2}))\varphi(\alpha(x_1),\dots,\widehat{\alpha(x_i)},\dots,\widehat{\alpha(x_j)},\dots,\alpha(x_{n+1}),\alpha(x_j))\\
&&-\sum_{1\leq i<j\leq n+1}(-1)^{i+j}\mu(\alpha^{n-1}(x_j) \cdot\alpha^{n-1}(x_{n+2}))\beta\varphi(x_1,\dots,\widehat{x_i},\dots,\widehat{x_j},\dots,x_{n+1}, x_i)\\
&&+\sum_{1\leq i<j\leq n+1}(-1)^{i+j}\mu(\alpha^{n-1}(x_j) \cdot\alpha^{n-1}(x_{n+2}))\varphi(\alpha(x_1),\dots,\widehat{\alpha(x_i)},\dots,\widehat{\alpha(x_j)},\dots,\alpha(x_{n+1}),\alpha(x_i)).\\
\end{eqnarray*}
}
Since $\alpha$ is an algebra morphism, we obtain that $\partial_{\omega\omega}\circ \partial_{\omega\omega}+\partial_{\alpha\omega}\circ \partial_{\omega\alpha}$=0.

For all $x_1,\dots,x_{n+2}\in A$, we have
{\footnotesize
\begin{eqnarray}
\nonumber&&\partial_{\omega\omega}(\partial_{\alpha\omega}\psi)(x_1,\dots,x_{n+2})\\
\nonumber&=&\sum_{i=1}^{n+1}(-1)^{i+1}\rho(\alpha^n(x_i))(\partial_{\alpha\omega}\psi)(x_1,\dots,\widehat{x_i},\dots,x_{n+2})\\
\nonumber&&+\sum_{i=1}^{n+1}(-1)^{i+1}\mu(\alpha^n(x_{n+2}))(\partial_{\alpha\omega}\psi)(x_1,\dots,\widehat{x_i},\dots,x_{n+1},x_i)\\
\nonumber&&-\sum_{i=1}^{n+1}(-1)^{i+1}(\partial_{\alpha\omega}\psi)(\alpha(x_1),\dots,\widehat{\alpha(x_i)},\dots,\alpha(x_{n+1}),x_i\cdot x_{n+2})\\
\nonumber&&+\sum_{1\leq i<j\leq n+1}(-1)^{i+j} (\partial_{\alpha\omega}\psi)([x_i,x_j]_C,\alpha(x_1),\dots,\widehat{\alpha(x_i)},\dots,\widehat{\alpha(x_j)},\dots,\alpha(x_{n+2}))\\
\label{eq-60}&=&\sum_{1\leq j<k<i\leq n+1}(-1)^{i+1}(-1)^{j+k}\rho(\alpha^n(x_i))\rho([\alpha^{n-2}(x_j),\alpha^{n-2}(x_k)]_C)\psi(x_1,\dots,\hat{x_j},\dots,\hat{x_k}
\dots,\hat{x_i},\dots,x_{n+2})\\
\label{eq-61}&&+\sum_{1\leq j<i<k\leq n+1}(-1)^{i+1}(-1)^{j+k-1}\rho(\alpha^n(x_i))\rho([\alpha^{n-2}(x_j),\alpha^{n-2}(x_k)]_C)\psi(x_1,\dots,\hat{x_j},\dots,\hat{x_i}
\dots,\hat{x_k},\dots,x_{n+2})\\
\label{eq-62}&&+\sum_{1\leq i<j<k\leq n+1}(-1)^{i+1}(-1)^{j+k}\rho(\alpha^n(x_i))\rho([\alpha^{n-2}(x_j),\alpha^{n-2}(x_k)]_C)\psi(x_1,\dots,\hat{x_i},\dots,\hat{x_j}
\dots,\hat{x_k},\dots,x_{n+2})\\
\label{eq-63}&&+\sum_{1\leq j<k<i\leq n+1}(-1)^{i+1}(-1)^{j+k}\rho(\alpha^n(x_i))\mu(\alpha^{n-2}(x_j)\cdot\alpha^{n-2}(x_{n+2}))\psi(x_1,\dots,\hat{x_j},\dots,\hat{x_k}
\dots,\hat{x_i},\dots,x_{n+1},x_k)\\
\label{eq-64}&&+\sum_{1\leq j<i<k\leq n+1}(-1)^{i+1}(-1)^{j+k-1}\rho(\alpha^n(x_i))\mu(\alpha^{n-2}(x_j)\cdot\alpha^{n-2}(x_{n+2}))\psi(x_1,\dots,\hat{x_j},\dots,\hat{x_i}
\dots,\hat{x_k},\dots,x_{n+1},x_k)\\
\label{eq-65}&&+\sum_{1\leq i<j<k\leq n+1}(-1)^{i+1}(-1)^{j+k}\rho(\alpha^n(x_i))\mu(\alpha^{n-2}(x_j)\cdot\alpha^{n-2}(x_{n+2}))\psi(x_1,\dots,\hat{x_i},\dots,\hat{x_j}
\dots,\hat{x_k},\dots,x_{n+1},x_k)\\
\label{eq-66}&&-\sum_{1\leq j<k<i\leq n+1}(-1)^{i+1}(-1)^{j+k}\rho(\alpha^n(x_i))\mu(\alpha^{n-2}(x_k)\cdot\alpha^{n-2}(x_{n+2}))\psi(x_1,\dots,\hat{x_j},\dots,\hat{x_k}
\dots,\hat{x_i},\dots,x_{n+1},x_j)\\
\label{eq-67}&&-\sum_{1\leq j<i<k\leq n+1}(-1)^{i+1}(-1)^{j+k-1}\rho(\alpha^n(x_i))\mu(\alpha^{n-2}(x_k)\cdot\alpha^{n-2}(x_{n+2}))\psi(x_1,\dots,\hat{x_j},\dots,\hat{x_i}
\dots,\hat{x_k},\dots,x_{n+1},x_j)\\
\label{eq-68}&&-\sum_{1\leq i<j<k\leq n+1}(-1)^{i+1}(-1)^{j+k}\rho(\alpha^n(x_i))\mu(\alpha^{n-2}(x_k)\cdot\alpha^{n-2}(x_{n+2}))\psi(x_1,\dots,\hat{x_i},\dots,\hat{x_j}
\dots,\hat{x_k},\dots,x_{n+1},x_j)\\
\label{eq-69}&&+\sum_{1\leq j<k<i\leq n+1}(-1)^{i+1}(-1)^{j+k}\mu(\alpha^n(x_{n+2}))\rho([\alpha^{n-2}(x_j),\alpha^{n-2}(x_k)]_C)\psi(x_1,\dots,\hat{x_j},\dots,\hat{x_k}
\dots,\hat{x_i},\dots,x_{n+1},x_i)\\
\label{eq-70}&&+\sum_{1\leq j<i<k\leq n+1}(-1)^{i+1}(-1)^{j+k-1}\mu(\alpha^n(x_{n+2}))\rho([\alpha^{n-2}(x_j),\alpha^{n-2}(x_k)]_C)\psi(x_1,\dots,\hat{x_j},\dots,\hat{x_i}
\dots,\hat{x_k},\dots,x_{n+1},x_i)\\
\label{eq-71}&&+\sum_{1\leq i<j<k\leq n+1}(-1)^{i+1}(-1)^{j+k}\mu(\alpha^n(x_{n+2}))\rho([\alpha^{n-2}(x_j),\alpha^{n-2}(x_k)]_C)\psi(x_1,\dots,\hat{x_i},\dots,\hat{x_j}
\dots,\hat{x_k},\dots,x_{n+1},x_i)\\
\label{eq-72}&&+\sum_{1\leq j<k<i\leq n+1}(-1)^{i+1}(-1)^{j+k}\mu(\alpha^n(x_{n+2}))\mu(\alpha^{n-2}(x_j)\cdot\alpha^{n-2}(x_i))\psi(x_1,\dots,\hat{x_j},\dots,\hat{x_k}
\dots,\hat{x_i},\dots,x_{n+1},x_k)\\
\label{eq-73}&&+\sum_{1\leq j<i<k\leq n+1}(-1)^{i+1}(-1)^{j+k-1}\mu(\alpha^n(x_{n+2}))\mu(\alpha^{n-2}(x_j)\cdot\alpha^{n-2}(x_i))\psi(x_1,\dots,\hat{x_j},\dots,\hat{x_i}
\dots,\hat{x_k},\dots,x_{n+1},x_k)\\
\label{eq-74}&&+\sum_{1\leq i<j<k\leq n+1}(-1)^{i+1}(-1)^{j+k}\mu(\alpha^n(x_{n+2}))\mu(\alpha^{n-2}(x_j)\cdot\alpha^{n-2}(x_i))\psi(x_1,\dots,\hat{x_i},\dots,\hat{x_j}
\dots,\hat{x_k},\dots,x_{n+1},x_k)\\
\label{eq-75}&&-\sum_{1\leq j<k<i\leq n+1}(-1)^{i+1}(-1)^{j+k}\mu(\alpha^n(x_{n+2}))\mu(\alpha^{n-2}(x_k)\cdot\alpha^{n-2}(x_i))\psi(x_1,\dots,\hat{x_j},\dots,\hat{x_k}
\dots,\hat{x_i},\dots,x_{n+1},x_j)\\
\label{eq-76}&&-\sum_{1\leq j<i<k\leq n+1}(-1)^{i+1}(-1)^{j+k-1}\mu(\alpha^n(x_{n+2}))\mu(\alpha^{n-2}(x_k)\cdot\alpha^{n-2}(x_i))\psi(x_1,\dots,\hat{x_j},\dots,\hat{x_i}
\dots,\hat{x_k},\dots,x_{n+1},x_j)\\
\label{eq-77}&&-\sum_{1\leq i<j<k\leq n+1}(-1)^{i+1}(-1)^{j+k}\mu(\alpha^n(x_{n+2}))\mu(\alpha^{n-2}(x_k)\cdot\alpha^{n-2}(x_i))\psi(x_1,\dots,\hat{x_i},\dots,\hat{x_j}
\dots,\hat{x_k},\dots,x_{n+1},x_j)\\
\label{eq-78}&&-\sum_{1\leq j<k<i\leq n+1}(-1)^{i+1}(-1)^{j+k}\rho([\alpha^{n-1}(x_j),\alpha^{n-1}(x_k)]_C)\psi(\alpha(x_1),\dots,\widehat{\alpha(x_j)},\dots,\widehat{\alpha(x_k)},\\
\nonumber&&\dots,\widehat{\alpha(x_i)},\dots,\alpha(x_{n+1}),x_i\cdot x_{n+2})\\
\label{eq-79}&&-\sum_{1\leq j<i<k\leq n+1}(-1)^{i+1}(-1)^{j+k-1}\rho([\alpha^{n-1}(x_j),\alpha^{n-1}(x_k)]_C)\psi(\alpha(x_1),\dots,\widehat{\alpha(x_j)},\dots,\widehat{\alpha(x_i)},\\
\nonumber&&\dots,\widehat{\alpha(x_k)},\dots,\alpha(x_{n+1}),x_i\cdot x_{n+2})\\
\label{eq-80}&&-\sum_{1\leq i<j<k\leq n+1}(-1)^{i+1}(-1)^{j+k}\rho([\alpha^{n-1}(x_j),\alpha^{n-1}(x_k)]_C)\psi(\alpha(x_1),\dots,\widehat{\alpha(x_i)},\dots,\widehat{\alpha(x_j)},\\
\nonumber&&\dots,\widehat{\alpha(x_k)},\dots,\alpha(x_{n+1}),x_i\cdot x_{n+2})\\
\label{eq-81}&&-\sum_{1\leq j<k<i\leq n+1}(-1)^{i+1}(-1)^{j+k}\mu(\alpha^{n-1}(x_j)\cdot\alpha^{n-2}(x_i\cdot x_{n+2}))\psi(\alpha(x_1),\dots,\widehat{\alpha(x_j)},\dots,\widehat{\alpha(x_k)},\\
\nonumber&&\dots,\widehat{\alpha(x_i)},\dots,\alpha(x_{n+1}),\alpha(x_k))\\
\label{eq-82}&&-\sum_{1\leq j<i<k\leq n+1}(-1)^{i+1}(-1)^{j+k-1}\mu(\alpha^{n-1}(x_j)\cdot\alpha^{n-2}(x_i\cdot x_{n+2}))\psi(\alpha(x_1),\dots,\widehat{\alpha(x_j)},\dots,\widehat{\alpha(x_i)},\\
\nonumber&&\dots,\widehat{\alpha(x_k)},\dots,\alpha(x_{n+1}),\alpha(x_k))\\
\label{eq-83}&&-\sum_{1\leq i<j<k\leq n+1}(-1)^{i+1}(-1)^{j+k}\mu(\alpha^{n-1}(x_j)\cdot\alpha^{n-2}(x_i\cdot x_{n+2}))\psi(\alpha(x_1),\dots,\widehat{\alpha(x_i)},\dots,\widehat{\alpha(x_j)},\\
\nonumber&&\dots,\widehat{\alpha(x_k)},\dots,\alpha(x_{n+1}),\alpha(x_k))\\
\label{eq-84}&&+\sum_{1\leq j<k<i\leq n+1}(-1)^{i+1}(-1)^{j+k}\mu(\alpha^{n-1}(x_k)\cdot\alpha^{n-2}(x_i\cdot x_{n+2}))\psi(\alpha(x_1),\dots,\widehat{\alpha(x_j)},\dots,\widehat{\alpha(x_k)},\\
\nonumber&&\dots,\widehat{\alpha(x_i)},\dots,\alpha(x_{n+1}),\alpha(x_j))\\
\label{eq-85}&&+\sum_{1\leq j<i<k\leq n+1}(-1)^{i+1}(-1)^{j+k-1}\mu(\alpha^{n-1}(x_k)\cdot\alpha^{n-2}(x_i\cdot x_{n+2}))\psi(\alpha(x_1),\dots,\widehat{\alpha(x_j)},\dots,\widehat{\alpha(x_i)},\\
\nonumber&&\dots,\widehat{\alpha(x_k)},\dots,\alpha(x_{n+1}),\alpha(x_j))\\
\label{eq-86}&&+\sum_{1\leq i<j<k\leq n+1}(-1)^{i+1}(-1)^{j+k}\mu(\alpha^{n-1}(x_k)\cdot\alpha^{n-2}(x_i\cdot x_{n+2}))\psi(\alpha(x_1),\dots,\widehat{\alpha(x_i)},\dots,\widehat{\alpha(x_j)},\\
\nonumber&&\dots,\widehat{\alpha(x_k)},\dots,\alpha(x_{n+1}),\alpha(x_j))\\
\label{eq-87}&&+\sum_{1\leq k<l<i<j\leq n+1}(-1)^{i+j}(-1)^{k+l}\rho([\alpha^{n-1}(x_k),\alpha^{n-1}(x_l)]_C)\psi([x_i,x_j]_C,\alpha(x_1),\dots,\widehat{\alpha(x_k)},\dots,\\
\nonumber&&\widehat{\alpha(x_l)},\dots,\widehat{\alpha(x_i)},\dots,\widehat{\alpha(x_j)},\dots,\alpha(x_{n+2}))\\
\label{eq-88}&&+\sum_{1\leq k<i<l<j\leq n+1}(-1)^{i+j}(-1)^{k+l-1}\rho([\alpha^{n-1}(x_k),\alpha^{n-1}(x_l)]_C)\psi([x_i,x_j]_C,\alpha(x_1),\dots,\widehat{\alpha(x_k)},\dots,\\
\nonumber&&\widehat{\alpha(x_i)},\dots,\widehat{\alpha(x_l)},\dots,\widehat{\alpha(x_j)},\dots,\alpha(x_{n+2}))\\
\label{eq-89}&&+\sum_{1\leq k<i<j<l\leq n+1}(-1)^{i+j}(-1)^{k+l}\rho([\alpha^{n-1}(x_k),\alpha^{n-1}(x_l)]_C)\psi([x_i,x_j]_C,\alpha(x_1),\dots,\widehat{\alpha(x_k)},\dots,\\
\nonumber&&\widehat{\alpha(x_i)},\dots,\widehat{\alpha(x_j)},\dots,\widehat{\alpha(x_l)},\dots,\alpha(x_{n+2}))\\
\label{eq-90}&&+\sum_{1\leq i<k<l<j\leq n+1}(-1)^{i+j}(-1)^{k+l}\rho([\alpha^{n-1}(x_k),\alpha^{n-1}(x_l)]_C)\psi([x_i,x_j]_C,\alpha(x_1),\dots,\widehat{\alpha(x_i)},\dots,\\
\nonumber&&\widehat{\alpha(x_k)},\dots,\widehat{\alpha(x_l)},\dots,\widehat{\alpha(x_j)},\dots,\alpha(x_{n+2}))\\
\label{eq-91}&&+\sum_{1\leq i<k<j<l\leq n+1}(-1)^{i+j}(-1)^{k+l-1}\rho([\alpha^{n-1}(x_k),\alpha^{n-1}(x_l)]_C)\psi([x_i,x_j]_C,\alpha(x_1),\dots,\widehat{\alpha(x_i)},\dots,\\
\nonumber&&\widehat{\alpha(x_k)},\dots,\widehat{\alpha(x_j)},\dots,\widehat{\alpha(x_l)},\dots,\alpha(x_{n+2}))\\
\label{eq-92}&&+\sum_{1\leq i<j<k<l\leq n+1}(-1)^{i+j}(-1)^{k+l}\rho([\alpha^{n-1}(x_k),\alpha^{n-1}(x_l)]_C)\psi([x_i,x_j]_C,\alpha(x_1),\dots,\widehat{\alpha(x_i)},\dots,\\
\nonumber&&\widehat{\alpha(x_j)},\dots,\widehat{\alpha(x_k)},\dots,\widehat{\alpha(x_l)},\dots,\alpha(x_{n+2}))\\
\label{eq-93}&&+\sum_{1\leq k<i<j\leq n+1}(-1)^{i+j}(-1)^k\rho([\alpha^{n-2}[x_i,x_j]_C,\alpha^{n-1}(x_k)]_C)\psi(\alpha(x_1),\dots,\widehat{\alpha(x_k)},\dots,\widehat{\alpha(x_i)},\\
\nonumber&&\dots,\widehat{\alpha(x_j)},\dots,\alpha(x_{n+2}))\\
\label{eq-94}&&+\sum_{1\leq i<k<j\leq n+1}(-1)^{i+j}(-1)^{k+1}\rho([\alpha^{n-2}[x_i,x_j]_C,\alpha^{n-1}(x_k)]_C)\psi(\alpha(x_1),\dots,\widehat{\alpha(x_i)},\dots,\widehat{\alpha(x_k)},\\
\nonumber&&\dots,\widehat{\alpha(x_j)},\dots,\alpha(x_{n+2}))\\
\label{eq-95}&&+\sum_{1\leq i<j<k\leq n+1}(-1)^{i+j}(-1)^k\rho([\alpha^{n-2}[x_i,x_j]_C,\alpha^{n-1}(x_k)]_C)\psi(\alpha(x_1),\dots,\widehat{\alpha(x_i)},\dots,\widehat{\alpha(x_j)},\\
\nonumber&&\dots,\widehat{\alpha(x_k)},\dots,\alpha(x_{n+2}))\\
\label{eq-96}&&+\sum_{1\leq k<l<i<j\leq n+1}(-1)^{i+j}(-1)^{k+l}\mu(\alpha^{n-1}(x_k)\cdot\alpha^{n-1}(x_{n+2}))\psi([x_i,x_j]_C,\alpha(x_1),\dots,\widehat{\alpha(x_k)},\dots,\\
\nonumber&&\widehat{\alpha(x_l)},\dots,\widehat{\alpha(x_i)},\dots,\widehat{\alpha(x_j)},\dots,\alpha(x_{n+1}),\alpha(x_l))\\
\label{eq-97}&&+\sum_{1\leq k<i<l<j\leq n+1}(-1)^{i+j}(-1)^{k+l-1}\mu(\alpha^{n-1}(x_k)\cdot\alpha^{n-1}(x_{n+2}))\psi([x_i,x_j]_C,\alpha(x_1),\dots,\widehat{\alpha(x_k)},\dots,\\
\nonumber&&\widehat{\alpha(x_i)},\dots,\widehat{\alpha(x_l)},\dots,\widehat{\alpha(x_j)},\dots,\alpha(x_{n+1}),\alpha(x_l))\\
\label{eq-98}&&+\sum_{1\leq k<i<j<l\leq n+1}(-1)^{i+j}(-1)^{k+l}\mu(\alpha^{n-1}(x_k)\cdot\alpha^{n-1}(x_{n+2}))\psi([x_i,x_j]_C,\alpha(x_1),\dots,\widehat{\alpha(x_k)},\dots,\\
\nonumber&&\widehat{\alpha(x_i)},\dots,\widehat{\alpha(x_j)},\dots,\widehat{\alpha(x_l)},\dots,\alpha(x_{n+1}),\alpha(x_l))\\
\label{eq-99}&&+\sum_{1\leq i<k<l<j\leq n+1}(-1)^{i+j}(-1)^{k+l}\mu(\alpha^{n-1}(x_k)\cdot\alpha^{n-1}(x_{n+2}))\psi([x_i,x_j]_C,\alpha(x_1),\dots,\widehat{\alpha(x_i)},\dots,\\
\nonumber&&\widehat{\alpha(x_k)},\dots,\widehat{\alpha(x_l)},\dots,\widehat{\alpha(x_j)},\dots,\alpha(x_{n+1}),\alpha(x_l))\\
\label{eq-100}&&+\sum_{1\leq i<k<j<l\leq n+1}(-1)^{i+j}(-1)^{k+l-1}\mu(\alpha^{n-1}(x_k)\cdot\alpha^{n-1}(x_{n+2}))\psi([x_i,x_j]_C,\alpha(x_1),\dots,\widehat{\alpha(x_i)},\dots,\\
\nonumber&&\widehat{\alpha(x_k)},\dots,\widehat{\alpha(x_j)},\dots,\widehat{\alpha(x_l)},\dots,\alpha(x_{n+1}),\alpha(x_l))\\
\label{eq-101}&&+\sum_{1\leq i<j<k<l\leq n+1}(-1)^{i+j}(-1)^{k+l}\mu(\alpha^{n-1}(x_k)\cdot\alpha^{n-1}(x_{n+2}))\psi([x_i,x_j]_C,\alpha(x_1),\dots,\widehat{\alpha(x_i)},\dots,\\
\nonumber&&\widehat{\alpha(x_j)},\dots,\widehat{\alpha(x_k)},\dots,\widehat{\alpha(x_l)},\dots,\alpha(x_{n+1}),\alpha(x_l))\\
\label{eq-102}&&-\sum_{1\leq k<l<i<j\leq n+1}(-1)^{i+j}(-1)^{k+l}\mu(\alpha^{n-1}(x_l)\cdot\alpha^{n-1}(x_{n+2}))\psi([x_i,x_j]_C,\alpha(x_1),\dots,\widehat{\alpha(x_k)},\dots,\\
\nonumber&&\widehat{\alpha(x_l)},\dots,\widehat{\alpha(x_i)},\dots,\widehat{\alpha(x_j)},\dots,\alpha(x_{n+1}),\alpha(x_k))\\
\label{eq-103}&&-\sum_{1\leq k<i<l<j\leq n+1}(-1)^{i+j}(-1)^{k+l-1}\mu(\alpha^{n-1}(x_l)\cdot\alpha^{n-1}(x_{n+2}))\psi([x_i,x_j]_C,\alpha(x_1),\dots,\widehat{\alpha(x_k)},\dots,\\
\nonumber&&\widehat{\alpha(x_i)},\dots,\widehat{\alpha(x_l)},\dots,\widehat{\alpha(x_j)},\dots,\alpha(x_{n+1}),\alpha(x_k))\\
\label{eq-104}&&-\sum_{1\leq k<i<j<l\leq n+1}(-1)^{i+j}(-1)^{k+l}\mu(\alpha^{n-1}(x_l)\cdot\alpha^{n-1}(x_{n+2}))\psi([x_i,x_j]_C,\alpha(x_1),\dots,\widehat{\alpha(x_k)},\dots,\\
\nonumber&&\widehat{\alpha(x_i)},\dots,\widehat{\alpha(x_j)},\dots,\widehat{\alpha(x_l)},\dots,\alpha(x_{n+1}),\alpha(x_k))\\
\label{eq-105}&&-\sum_{1\leq i<k<l<j\leq n+1}(-1)^{i+j}(-1)^{k+l}\mu(\alpha^{n-1}(x_l)\cdot\alpha^{n-1}(x_{n+2}))\psi([x_i,x_j]_C,\alpha(x_1),\dots,\widehat{\alpha(x_i)},\dots,\\
\nonumber&&\widehat{\alpha(x_k)},\dots,\widehat{\alpha(x_l)},\dots,\widehat{\alpha(x_j)},\dots,\alpha(x_{n+1}),\alpha(x_k))\\
\label{eq-106}&&-\sum_{1\leq i<k<j<l\leq n+1}(-1)^{i+j}(-1)^{k+l-1}\mu(\alpha^{n-1}(x_l)\cdot\alpha^{n-1}(x_{n+2}))\psi([x_i,x_j]_C,\alpha(x_1),\dots,\widehat{\alpha(x_i)},\dots,\\
\nonumber&&\widehat{\alpha(x_k)},\dots,\widehat{\alpha(x_j)},\dots,\widehat{\alpha(x_l)},\dots,\alpha(x_{n+1}),\alpha(x_k))\\
\label{eq-107}&&-\sum_{1\leq i<j<k<l\leq n+1}(-1)^{i+j}(-1)^{k+l}\mu(\alpha^{n-1}(x_l)\cdot\alpha^{n-1}(x_{n+2}))\psi([x_i,x_j]_C,\alpha(x_1),\dots,\widehat{\alpha(x_i)},\dots,\\
\nonumber&&\widehat{\alpha(x_j)},\dots,\widehat{\alpha(x_k)},\dots,\widehat{\alpha(x_l)},\dots,\alpha(x_{n+1}),\alpha(x_k))\\
\label{eq-108}&&+\sum_{1\leq k<i<j\leq n+1}(-1)^{i+j}(-1)^k\mu(\alpha^{n-2}[x_i,x_j]_C\cdot\alpha^{n-1}(x_{n+2}))\psi(\alpha(x_1),\dots,\widehat{\alpha(x_k)},\dots,\widehat{\alpha(x_i)},\\
\nonumber&&\dots,\widehat{\alpha(x_j)},\dots,\alpha(x_{n+1}),\alpha(x_k))\\
\label{eq-109}&&+\sum_{1\leq i<k<j\leq n+1}(-1)^{i+j}(-1)^{k+1}\mu(\alpha^{n-2}[x_i,x_j]_C\cdot\alpha^{n-1}(x_{n+2}))\psi(\alpha(x_1),\dots,\widehat{\alpha(x_i)},\dots,\widehat{\alpha(x_k)},\\
\nonumber&&\dots,\widehat{\alpha(x_j)},\dots,\alpha(x_{n+1}),\alpha(x_k))\\
\label{eq-110}&&+\sum_{1\leq i<j<k\leq n+1}(-1)^{i+j}(-1)^k\mu(\alpha^{n-2}[x_i,x_j]_C\cdot\alpha^{n-1}(x_{n+2}))\psi(\alpha(x_1),\dots,\widehat{\alpha(x_i)},\dots,\widehat{\alpha(x_j)},\\
\nonumber&&\dots,\widehat{\alpha(x_k)},\dots,\alpha(x_{n+1}),\alpha(x_k))\\
\label{eq-111}&&-\sum_{1\leq k<i<j\leq n+1}(-1)^{i+j}(-1)^k\mu(\alpha^{n-1}(x_k)\cdot\alpha^{n-1}(x_{n+2}))\psi(\alpha(x_1),\dots,\widehat{\alpha(x_k)},\dots,\widehat{\alpha(x_i)},\\
\nonumber&&\dots,\widehat{\alpha(x_j)},\dots,\alpha(x_{n+1}),[x_i,x_j]_C)\\
\label{eq-112}&&-\sum_{1\leq i<k<j\leq n+1}(-1)^{i+j}(-1)^{k+1}\mu(\alpha^{n-1}(x_k)\cdot\alpha^{n-1}(x_{n+2}))\psi(\alpha(x_1),\dots,\widehat{\alpha(x_i)},\dots,\widehat{\alpha(x_k)},\\
\nonumber&&\dots,\widehat{\alpha(x_j)},\dots,\alpha(x_{n+1}),[x_i,x_j]_C)\\
\label{eq-113}&&-\sum_{1\leq i<j<k\leq n+1}(-1)^{i+j}(-1)^k\mu(\alpha^{n-1}(x_k)\cdot\alpha^{n-1}(x_{n+2}))\psi(\alpha(x_1),\dots,\widehat{\alpha(x_i)},\dots,\widehat{\alpha(x_j)},\\
\nonumber&&\dots,\widehat{\alpha(x_k)},\dots,\alpha(x_{n+1}),[x_i,x_j]_C),
\end{eqnarray}
}
and
{\footnotesize
\begin{eqnarray}
\nonumber&&\partial_{\alpha\omega}(\partial_{\alpha\alpha} \psi)(x_1,\dots,x_{n+2})\\
\nonumber&=&\sum_{1\leq i<j\leq n+1}(-1)^{i+j}\rho([\alpha^{n-1}(x_i),\alpha^{n-1}(x_j)]_C)(\partial_{\alpha\alpha} \psi)(x_1,\dots,\hat{x_i},\dots,\hat{x_j},\dots,x_{n+2})\\
\nonumber&&+\sum_{1\leq i<j\leq n+1}(-1)^{i+j}\mu(\alpha^{n-1}(x_i)\cdot\alpha^{n-1}(x_{n+2}))(\partial_{\alpha\alpha} \psi)(x_1,\dots,\hat{x_i},\dots,\hat{x_j},\dots,x_{n+1},x_j)\\
\nonumber&&-\sum_{1\leq i<j\leq n+1}(-1)^{i+j}\mu(\alpha^{n-1}(x_j)\cdot\alpha^{n-1}(x_{n+2}))(\partial_{\alpha\alpha} \psi)(x_1,\dots,\hat{x_i},\dots,\hat{x_j},\dots,x_{n+1},x_i)\\
\label{eq-114}&=&\sum_{1\leq k<i<j\leq n+1}(-1)^{i+j}(-1)^k\rho([\alpha^{n-1}(x_i),\alpha^{n-1}(x_j)]_C)\rho(\alpha^{n-1}(x_k))\psi(x_1,\dots,\hat{x_k},\dots,\hat{x_i},
\dots,\hat{x_j}\dots,x_{n+2})\\
\label{eq-115}&&+\sum_{1\leq i<k<j\leq n+1}(-1)^{i+j}(-1)^{k+1}\rho([\alpha^{n-1}(x_i),\alpha^{n-1}(x_j)]_C)\rho(\alpha^{n-1}(x_k))\psi(x_1,\dots,\hat{x_i},\dots,\hat{x_k},
\dots,\hat{x_j}\dots,x_{n+2})\\
\label{eq-116}&&+\sum_{1\leq i<j<k\leq n+1}(-1)^{i+j}(-1)^k\rho([\alpha^{n-1}(x_i),\alpha^{n-1}(x_j)]_C)\rho(\alpha^{n-1}(x_k))\psi(x_1,\dots,\hat{x_i},\dots,\hat{x_j},
\dots,\hat{x_k},\dots,x_{n+2})\\
\label{eq-117}&&+\sum_{1\leq k<i<j\leq n+1}(-1)^{i+j}(-1)^k\rho([\alpha^{n-1}(x_i),\alpha^{n-1}(x_j)]_C)\mu(\alpha^{n-1}(x_{n+2}))\psi(x_1,\dots,\hat{x_k},\dots,\hat{x_i},
\dots,\hat{x_j}\dots,x_{n+1},x_k)\\
\label{eq-118}&&+\sum_{1\leq i<k<j\leq n+1}(-1)^{i+j}(-1)^{k+1}\rho([\alpha^{n-1}(x_i),\alpha^{n-1}(x_j)]_C)\mu(\alpha^{n-1}(x_{n+2}))\psi(x_1,\dots,\hat{x_i},\dots,\hat{x_k},
\dots,\hat{x_j}\dots,x_{n+1},x_k)\\
\label{eq-119}&&+\sum_{1\leq i<j<k\leq n+1}(-1)^{i+j}(-1)^k\rho([\alpha^{n-1}(x_i),\alpha^{n-1}(x_j)]_C)\mu(\alpha^{n-1}(x_{n+2}))\psi(x_1,\dots,\hat{x_i},\dots,\hat{x_j},
\dots,\hat{x_k}\dots,x_{n+1},x_k)\\
\label{eq-120}&&-\sum_{1\leq k<i<j\leq n+1}(-1)^{i+j}(-1)^k\rho([\alpha^{n-1}(x_i),\alpha^{n-1}(x_j)]_C)\psi(\alpha(x_1),\dots,\widehat{\alpha(x_k)},\dots,\widehat{\alpha(x_i)},\\
\nonumber&&\dots,\widehat{\alpha(x_j)},\dots,\alpha(x_{n+1}),x_k\cdot x_{n+2})\\
\label{eq-121}&&-\sum_{1\leq i<k<j\leq n+1}(-1)^{i+j}(-1)^{k+1}\rho([\alpha^{n-1}(x_i),\alpha^{n-1}(x_j)]_C)\psi(\alpha(x_1),\dots,\widehat{\alpha(x_i)},\dots,\widehat{\alpha(x_k)},\\
\nonumber&&\dots,\widehat{\alpha(x_j)},\dots,\alpha(x_{n+1}),x_k\cdot x_{n+2})\\
\label{eq-122}&&-\sum_{1\leq i<j<k\leq n+1}(-1)^{i+j}(-1)^k\rho([\alpha^{n-1}(x_i),\alpha^{n-1}(x_j)]_C)\psi(\alpha(x_1),\dots,\widehat{\alpha(x_i)},\dots,\widehat{\alpha(x_j)},\\
\nonumber&&\dots,\widehat{\alpha(x_k)},\dots,\alpha(x_{n+1}),x_k\cdot x_{n+2})\\
\label{eq-123}&&+\sum_{1\leq k<l<i<j\leq n+1}(-1)^{i+j}(-1)^{k+l-1}\rho([\alpha^{n-1}(x_i),\alpha^{n-1}(x_j)]_C)\psi([x_k,x_l]_C,\alpha(x_1),\dots,\widehat{\alpha(x_k)},\\
\nonumber&&\dots,\widehat{\alpha(x_l)},\dots,\widehat{\alpha(x_i)},\dots,\widehat{\alpha(x_j)},\dots,\alpha(x_{n+2}))\\
\label{eq-124}&&+\sum_{1\leq k<i<l<j\leq n+1}(-1)^{i+j}(-1)^{k+l}\rho([\alpha^{n-1}(x_i),\alpha^{n-1}(x_j)]_C)\psi([x_k,x_l]_C,\alpha(x_1),\dots,\widehat{\alpha(x_k)},\\
\nonumber&&\dots,\widehat{\alpha(x_i)},\dots,\widehat{\alpha(x_l)},\dots,\widehat{\alpha(x_j)},\dots,\alpha(x_{n+2}))\\
\label{eq-125}&&+\sum_{1\leq k<i<j<l\leq n+1}(-1)^{i+j}(-1)^{k+l-1}\rho([\alpha^{n-1}(x_i),\alpha^{n-1}(x_j)]_C)\psi([x_k,x_l]_C,\alpha(x_1),\dots,\widehat{\alpha(x_k)},\\
\nonumber&&\dots,\widehat{\alpha(x_i)},\dots,\widehat{\alpha(x_j)},\dots,\widehat{\alpha(x_l)},\dots,\alpha(x_{n+2}))\\
\label{eq-126}&&+\sum_{1\leq i<k<l<j\leq n+1}(-1)^{i+j}(-1)^{k+l-1}\rho([\alpha^{n-1}(x_i),\alpha^{n-1}(x_j)]_C)\psi([x_k,x_l]_C,\alpha(x_1),\dots,\widehat{\alpha(x_i)},\\
\nonumber&&\dots,\widehat{\alpha(x_k)},\dots,\widehat{\alpha(x_l)},\dots,\widehat{\alpha(x_j)},\dots,\alpha(x_{n+2}))\\
\label{eq-127}&&+\sum_{1\leq i<k<j<l\leq n+1}(-1)^{i+j}(-1)^{k+l}\rho([\alpha^{n-1}(x_i),\alpha^{n-1}(x_j)]_C)\psi([x_k,x_l]_C,\alpha(x_1),\dots,\widehat{\alpha(x_i)},\\
\nonumber&&\dots,\widehat{\alpha(x_k)},\dots,\widehat{\alpha(x_j)},\dots,\widehat{\alpha(x_l)},\dots,\alpha(x_{n+2}))\\
\label{eq-128}&&+\sum_{1\leq i<j<k<l\leq n+1}(-1)^{i+j}(-1)^{k+l-1}\rho([\alpha^{n-1}(x_i),\alpha^{n-1}(x_j)]_C)\psi([x_k,x_l]_C,\alpha(x_1),\dots,\widehat{\alpha(x_i)},\\
\nonumber&&\dots,\widehat{\alpha(x_j)},\dots,\widehat{\alpha(x_k)},\dots,\widehat{\alpha(x_l)},\dots,\alpha(x_{n+2}))\\
\label{eq-129}&&+\sum_{1\leq k<i<j\leq n+1}(-1)^{i+j}(-1)^k\mu(\alpha^{n-1}(x_i)\cdot\alpha^{n-1}(x_{n+2}))\rho(\alpha^{n-1}(x_k))\psi(x_1,\dots,\hat{x_k},\dots,\hat{x_i},
\dots,\hat{x_j}\dots,x_{n+1},x_j)\\
\label{eq-130}&&+\sum_{1\leq i<k<j\leq n+1}(-1)^{i+j}(-1)^{k+1}\mu(\alpha^{n-1}(x_i)\cdot\alpha^{n-1}(x_{n+2}))\rho(\alpha^{n-1}(x_k))\psi(x_1,\dots,\hat{x_i},\dots,\hat{x_k},
\dots,\hat{x_j}\dots,x_{n+1},x_j)\\
\label{eq-131}&&+\sum_{1\leq i<j<k\leq n+1}(-1)^{i+j}(-1)^k\mu(\alpha^{n-1}(x_i)\cdot\alpha^{n-1}(x_{n+2}))\rho(\alpha^{n-1}(x_k))\psi(x_1,\dots,\hat{x_i},\dots,\hat{x_j},
\dots,\hat{x_k}\dots,x_{n+1},x_j)\\
\label{eq-132}&&+\sum_{1\leq k<i<j\leq n+1}(-1)^{i+j}(-1)^k\mu(\alpha^{n-1}(x_i)\cdot\alpha^{n-1}(x_{n+2}))\mu(\alpha^{n-1}(x_j))\psi(x_1,\dots,\hat{x_k},\dots,\hat{x_i},
\dots,\hat{x_j}\dots,x_{n+1},x_k)\\
\label{eq-133}&&+\sum_{1\leq i<k<j\leq n+1}(-1)^{i+j}(-1)^{k+1}\mu(\alpha^{n-1}(x_i)\cdot\alpha^{n-1}(x_{n+2}))\mu(\alpha^{n-1}(x_j))\psi(x_1,\dots,\hat{x_i},\dots,\hat{x_k},
\dots,\hat{x_j}\dots,x_{n+1},x_k)\\
\label{eq-134}&&+\sum_{1\leq i<j<k\leq n+1}(-1)^{i+j}(-1)^k\mu(\alpha^{n-1}(x_i)\cdot\alpha^{n-1}(x_{n+2}))\mu(\alpha^{n-1}(x_j))\psi(x_1,\dots,\hat{x_i},\dots,\hat{x_j},
\dots,\hat{x_k}\dots,x_{n+1},x_k)\\
\label{eq-135}&&-\sum_{1\leq k<i<j\leq n+1}(-1)^{i+j}(-1)^k\mu(\alpha^{n-1}(x_i)\cdot\alpha^{n-1}(x_{n+2}))\psi(\alpha(x_1),\dots,\widehat{\alpha(x_k)},\dots,\widehat{\alpha(x_i)},\\
\nonumber&&\dots,\widehat{\alpha(x_j)},\dots,\alpha(x_{n+1}),x_k\cdot x_j)\\
\label{eq-136}&&-\sum_{1\leq i<k<j\leq n+1}(-1)^{i+j}(-1)^{k+1}\mu(\alpha^{n-1}(x_i)\cdot\alpha^{n-1}(x_{n+2}))\psi(\alpha(x_1),\dots,\widehat{\alpha(x_i)},\dots,\widehat{\alpha(x_k)},\\
\nonumber&&\dots,\widehat{\alpha(x_j)},\dots,\alpha(x_{n+1}),x_k\cdot x_j)\\
\label{eq-137}&&-\sum_{1\leq i<j<k\leq n+1}(-1)^{i+j}(-1)^k\mu(\alpha^{n-1}(x_i)\cdot\alpha^{n-1}(x_{n+2}))\psi(\alpha(x_1),\dots,\widehat{\alpha(x_i)},\dots,\widehat{\alpha(x_j)},\\
\nonumber&&\dots,\widehat{\alpha(x_k)},\dots,\alpha(x_{n+1}),x_k\cdot x_j)\\
\label{eq-138}&&+\sum_{1\leq k<l<i<j\leq n+1}(-1)^{i+j}(-1)^{k+l-1}\mu(\alpha^{n-1}(x_i)\cdot\alpha^{n-1}(x_{n+2}))\psi([x_k,x_l]_C,\alpha(x_1),\dots,\widehat{\alpha(x_k)},\\
\nonumber&&\dots,\widehat{\alpha(x_l)},\dots,\widehat{\alpha(x_i)},\dots,\widehat{\alpha(x_j)},\dots,\alpha(x_{n+1}),\alpha(x_j))\\
\label{eq-139}&&+\sum_{1\leq k<i<l<j\leq n+1}(-1)^{i+j}(-1)^{k+l}\mu(\alpha^{n-1}(x_i)\cdot\alpha^{n-1}(x_{n+2}))\psi([x_k,x_l]_C,\alpha(x_1),\dots,\widehat{\alpha(x_k)},\\
\nonumber&&\dots,\widehat{\alpha(x_i)},\dots,\widehat{\alpha(x_l)},\dots,\widehat{\alpha(x_j)},\dots,\alpha(x_{n+1}),\alpha(x_j))\\
\label{eq-140}&&+\sum_{1\leq k<i<j<l\leq n+1}(-1)^{i+j}(-1)^{k+l-1}\mu(\alpha^{n-1}(x_i)\cdot\alpha^{n-1}(x_{n+2}))\psi([x_k,x_l]_C,\alpha(x_1),\dots,\widehat{\alpha(x_k)},\\
\nonumber&&\dots,\widehat{\alpha(x_i)},\dots,\widehat{\alpha(x_j)},\dots,\widehat{\alpha(x_l)},\dots,\alpha(x_{n+1}),\alpha(x_j))\\
\label{eq-141}&&+\sum_{1\leq i<k<l<j\leq n+1}(-1)^{i+j}(-1)^{k+l-1}\mu(\alpha^{n-1}(x_i)\cdot\alpha^{n-1}(x_{n+2}))\psi([x_k,x_l]_C,\alpha(x_1),\dots,\widehat{\alpha(x_i)},\\
\nonumber&&\dots,\widehat{\alpha(x_k)},\dots,\widehat{\alpha(x_l)},\dots,\widehat{\alpha(x_j)},\dots,\alpha(x_{n+1}),\alpha(x_j))\\
\label{eq-142}&&+\sum_{1\leq i<k<j<l\leq n+1}(-1)^{i+j}(-1)^{k+l}\mu(\alpha^{n-1}(x_i)\cdot\alpha^{n-1}(x_{n+2}))\psi([x_k,x_l]_C,\alpha(x_1),\dots,\widehat{\alpha(x_i)},\\
\nonumber&&\dots,\widehat{\alpha(x_k)},\dots,\widehat{\alpha(x_j)},\dots,\widehat{\alpha(x_l)},\dots,\alpha(x_{n+1}),\alpha(x_j))\\
\label{eq-143}&&+\sum_{1\leq i<j<k<l\leq n+1}(-1)^{i+j}(-1)^{k+l-1}\mu(\alpha^{n-1}(x_i)\cdot\alpha^{n-1}(x_{n+2}))\psi([x_k,x_l]_C,\alpha(x_1),\dots,\widehat{\alpha(x_i)},\\
\nonumber&&\dots,\widehat{\alpha(x_j)},\dots,\widehat{\alpha(x_k)},\dots,\widehat{\alpha(x_l)},\dots,\alpha(x_{n+1}),\alpha(x_j))\\
\label{eq-144}&&-\sum_{1\leq k<i<j\leq n+1}(-1)^{i+j}(-1)^k\mu(\alpha^{n-1}(x_j)\cdot\alpha^{n-1}(x_{n+2}))\rho(\alpha^{n-1}(x_k))\psi(x_1,\dots,\hat{x_k},\dots,\hat{x_i},
\dots,\hat{x_j}\dots,x_{n+1},x_i)\\
\label{eq-145}&&-\sum_{1\leq i<k<j\leq n+1}(-1)^{i+j}(-1)^{k+1}\mu(\alpha^{n-1}(x_j)\cdot\alpha^{n-1}(x_{n+2}))\rho(\alpha^{n-1}(x_k))\psi(x_1,\dots,\hat{x_i},\dots,\hat{x_k},
\dots,\hat{x_j}\dots,x_{n+1},x_i)\\
\label{eq-146}&&-\sum_{1\leq i<j<k\leq n+1}(-1)^{i+j}(-1)^k\mu(\alpha^{n-1}(x_j)\cdot\alpha^{n-1}(x_{n+2}))\rho(\alpha^{n-1}(x_k))\psi(x_1,\dots,\hat{x_i},\dots,\hat{x_j},
\dots,\hat{x_k}\dots,x_{n+1},x_i)\\
\label{eq-147}&&-\sum_{1\leq k<i<j\leq n+1}(-1)^{i+j}(-1)^k\mu(\alpha^{n-1}(x_j)\cdot\alpha^{n-1}(x_{n+2}))\mu(\alpha^{n-1}(x_i))\psi(x_1,\dots,\hat{x_k},\dots,\hat{x_i},
\dots,\hat{x_j}\dots,x_{n+1},x_k)\\
\label{eq-148}&&-\sum_{1\leq i<k<j\leq n+1}(-1)^{i+j}(-1)^{k+1}\mu(\alpha^{n-1}(x_j)\cdot\alpha^{n-1}(x_{n+2}))\mu(\alpha^{n-1}(x_i))\psi(x_1,\dots,\hat{x_i},\dots,\hat{x_k},
\dots,\hat{x_j}\dots,x_{n+1},x_k)\\
\label{eq-149}&&-\sum_{1\leq i<j<k\leq n+1}(-1)^{i+j}(-1)^k\mu(\alpha^{n-1}(x_j)\cdot\alpha^{n-1}(x_{n+2}))\mu(\alpha^{n-1}(x_i))\psi(x_1,\dots,\hat{x_i},\dots,\hat{x_j},
\dots,\hat{x_k}\dots,x_{n+1},x_k)\\
\label{eq-150}&&+\sum_{1\leq k<i<j\leq n+1}(-1)^{i+j}(-1)^k\mu(\alpha^{n-1}(x_j)\cdot\alpha^{n-1}(x_{n+2}))\psi(\alpha(x_1),\dots,\widehat{\alpha(x_k)},\dots,\widehat{\alpha(x_i)},\\
\nonumber&&\dots,\widehat{\alpha(x_j)},\dots,\alpha(x_{n+1}),x_k\cdot x_i)\\
\label{eq-151}&&+\sum_{1\leq i<k<j\leq n+1}(-1)^{i+j}(-1)^{k+1}\mu(\alpha^{n-1}(x_j)\cdot\alpha^{n-1}(x_{n+2}))\psi(\alpha(x_1),\dots,\widehat{\alpha(x_i)},\dots,\widehat{\alpha(x_k)},\\
\nonumber&&\dots,\widehat{\alpha(x_j)},\dots,\alpha(x_{n+1}),x_k\cdot x_i)\\
\label{eq-152}&&+\sum_{1\leq i<j<k\leq n+1}(-1)^{i+j}(-1)^k\mu(\alpha^{n-1}(x_j)\cdot\alpha^{n-1}(x_{n+2}))\psi(\alpha(x_1),\dots,\widehat{\alpha(x_i)},\dots,\widehat{\alpha(x_j)},\\
\nonumber&&\dots,\widehat{\alpha(x_k)},\dots,\alpha(x_{n+1}),x_k\cdot x_i)\\
\label{eq-153}&&-\sum_{1\leq k<l<i<j\leq n+1}(-1)^{i+j}(-1)^{k+l-1}\mu(\alpha^{n-1}(x_j)\cdot\alpha^{n-1}(x_{n+2}))\psi([x_k,x_l]_C,\alpha(x_1),\dots,\widehat{\alpha(x_k)},\\
\nonumber&&\dots,\widehat{\alpha(x_l)},\dots,\widehat{\alpha(x_i)},\dots,\widehat{\alpha(x_j)},\dots,\alpha(x_{n+1}),\alpha(x_i))\\
\label{eq-154}&&-\sum_{1\leq k<i<l<j\leq n+1}(-1)^{i+j}(-1)^{k+l}\mu(\alpha^{n-1}(x_j)\cdot\alpha^{n-1}(x_{n+2}))\psi([x_k,x_l]_C,\alpha(x_1),\dots,\widehat{\alpha(x_k)},\\
\nonumber&&\dots,\widehat{\alpha(x_i)},\dots,\widehat{\alpha(x_l)},\dots,\widehat{\alpha(x_j)},\dots,\alpha(x_{n+1}),\alpha(x_i))\\
\label{eq-155}&&-\sum_{1\leq k<i<j<l\leq n+1}(-1)^{i+j}(-1)^{k+l-1}\mu(\alpha^{n-1}(x_j)\cdot\alpha^{n-1}(x_{n+2}))\psi([x_k,x_l]_C,\alpha(x_1),\dots,\widehat{\alpha(x_k)},\\
\nonumber&&\dots,\widehat{\alpha(x_i)},\dots,\widehat{\alpha(x_j)},\dots,\widehat{\alpha(x_l)},\dots,\alpha(x_{n+1}),\alpha(x_i))\\
\label{eq-156}&&-\sum_{1\leq i<k<l<j\leq n+1}(-1)^{i+j}(-1)^{k+l-1}\mu(\alpha^{n-1}(x_j)\cdot\alpha^{n-1}(x_{n+2}))\psi([x_k,x_l]_C,\alpha(x_1),\dots,\widehat{\alpha(x_i)},\\
\nonumber&&\dots,\widehat{\alpha(x_k)},\dots,\widehat{\alpha(x_l)},\dots,\widehat{\alpha(x_j)},\dots,\alpha(x_{n+1}),\alpha(x_i))\\
\label{eq-157}&&-\sum_{1\leq i<k<j<l\leq n+1}(-1)^{i+j}(-1)^{k+l}\mu(\alpha^{n-1}(x_j)\cdot\alpha^{n-1}(x_{n+2}))\psi([x_k,x_l]_C,\alpha(x_1),\dots,\widehat{\alpha(x_i)},\\
\nonumber&&\dots,\widehat{\alpha(x_k)},\dots,\widehat{\alpha(x_j)},\dots,\widehat{\alpha(x_l)},\dots,\alpha(x_{n+1}),\alpha(x_i))\\
\label{eq-158}&&-\sum_{1\leq i<j<k<l\leq n+1}(-1)^{i+j}(-1)^{k+l-1}\mu(\alpha^{n-1}(x_j)\cdot\alpha^{n-1}(x_{n+2}))\psi([x_k,x_l]_C,\alpha(x_1),\dots,\widehat{\alpha(x_i)},\\
\nonumber&&\dots,\widehat{\alpha(x_j)},\dots,\widehat{\alpha(x_k)},\dots,\widehat{\alpha(x_l)},\dots,\alpha(x_{n+1}),\alpha(x_i)).
\end{eqnarray}
}
By the definition of the sub-adjacent Hom-Lie algebra, the sum of \eqref{eq-93}, \eqref{eq-94} and \eqref{eq-95} is zero.  By the definition of Hom-pre-Lie algebras, the sum of \eqref{eq-81}, \eqref{eq-86} and \eqref{eq-109} is zero, the sum of \eqref{eq-82}, \eqref{eq-83} and \eqref{eq-110} is zero, the sum of \eqref{eq-84}, \eqref{eq-85} and \eqref{eq-108} is zero. Obviously, the sum of \eqref{eq-87} and \eqref{eq-128} is zero, the sum of \eqref{eq-88} and \eqref{eq-127} is zero, the sum of \eqref{eq-89} and \eqref{eq-126} is zero, the sum of \eqref{eq-90} and \eqref{eq-125} is zero, the sum of \eqref{eq-91} and \eqref{eq-124} is zero, the sum of \eqref{eq-92} and \eqref{eq-123} is zero, the sum of \eqref{eq-78} and \eqref{eq-122} is zero, the sum of \eqref{eq-79} and \eqref{eq-121} is zero, the sum of \eqref{eq-80} and \eqref{eq-120} is zero, the sum of \eqref{eq-96} and \eqref{eq-143} is zero, the sum of \eqref{eq-97} and \eqref{eq-142} is zero, the sum of \eqref{eq-98} and \eqref{eq-141} is zero, the sum of \eqref{eq-99} and \eqref{eq-140} is zero, the sum of \eqref{eq-100} and \eqref{eq-139} is zero, the sum of \eqref{eq-101} and \eqref{eq-138} is zero, the sum of \eqref{eq-102} and \eqref{eq-158} is zero, the sum of \eqref{eq-103} and \eqref{eq-157} is zero, the sum of \eqref{eq-104} and \eqref{eq-156} is zero, the sum of \eqref{eq-105} and \eqref{eq-155} is zero, the sum of \eqref{eq-106} and \eqref{eq-154} is zero, the sum of \eqref{eq-107} and \eqref{eq-153} is zero. By the definition of the sub-adjacent Lie bracket, the sum of \eqref{eq-111}, \eqref{eq-136} and \eqref{eq-137} is zero, the sum of \eqref{eq-112}, \eqref{eq-135} and \eqref{eq-152} is zero, the sum of \eqref{eq-113}, \eqref{eq-150} and \eqref{eq-151} is zero. Since $(V,\beta,\rho)$ is a representation of the sub-adjacent Hom-Lie algebra $A^C$, the sum of \eqref{eq-60}, \eqref{eq-61}, \eqref{eq-62}, \eqref{eq-114}, \eqref{eq-115} and \eqref{eq-116} is zero. Since $(V,\beta,\rho,\mu)$ is a representation of the Hom-pre-Lie algebra $(A,\cdot,\alpha)$, the sum of \eqref{eq-64}, \eqref{eq-65}, \eqref{eq-69}, \eqref{eq-73}, \eqref{eq-74}, \eqref{eq-119}, \eqref{eq-129}, \eqref{eq-130}, \eqref{eq-134} and \eqref{eq-149} is zero, the sum of \eqref{eq-63}, \eqref{eq-68}, \eqref{eq-70}, \eqref{eq-72}, \eqref{eq-77}, \eqref{eq-118}, \eqref{eq-131}, \eqref{eq-133}, \eqref{eq-144} and \eqref{eq-148} is zero, the sum of \eqref{eq-66}, \eqref{eq-67}, \eqref{eq-71}, \eqref{eq-75}, \eqref{eq-76}, \eqref{eq-117}, \eqref{eq-132}, \eqref{eq-145}, \eqref{eq-146} and \eqref{eq-147} is zero. Thus, we have $\partial_{\omega\omega}\circ \partial_{\alpha\omega}+\partial_{\alpha\omega}\circ \partial_{\alpha\alpha}=0$.

Similarly, we have $\partial_{\omega\alpha}\circ \partial_{\omega\omega}+\partial_{\alpha\alpha}\circ \partial_{\omega\alpha}=0$ and $\partial_{\omega\alpha}\circ \partial_{\alpha\omega}+\partial_{\alpha\alpha}\circ \partial_{\alpha\alpha}=0$. This finishes the proof.

 \end{document}